\documentclass[10pt,colorinlistoftodos]{article}
\usepackage[utf8]{inputenc}

\usepackage[dvips]{graphics,graphicx,epsfig}
\usepackage{amsmath}

\usepackage{amssymb, latexsym, amsbsy,amsthm}
\usepackage{amsfonts}
\usepackage{booktabs}
\usepackage[T1]{fontenc}
\usepackage{amscd}
\usepackage{graphicx}
\usepackage{subcaption}
\usepackage{latexsym}
\usepackage{color}
\usepackage[colorlinks]{hyperref}
\usepackage{subeqnarray}
\usepackage{eqnarray}
\usepackage{epsfig}
\usepackage{lineno}
\usepackage{color}
\usepackage[colorlinks]{hyperref}
\usepackage{bm}
\PassOptionsToPackage{force}{filehook}
\usepackage{tikz}
\usetikzlibrary{shapes,arrows}
\usetikzlibrary{positioning}
\tikzstyle{cloud} = [draw, ellipse,fill=red!20, node distance=0.87cm,
minimum height=2em]
\tikzstyle{line} = [draw, -latex']
\usetikzlibrary{shapes.symbols,shapes.callouts,patterns}
\usetikzlibrary{calc}

\def\R{\mathcal{R}}

\newcommand{\II}{\mathbb{I}}

\newcommand{\IR}{\mathbb{R}}

\def\ba{\bm{a}}
\def\bb{\bm{b}}
\def\bc{\bm{c}}

\def\bp{\bm{p}}
\def\bu{\bm{u}}

\def\bA{\bm{A}}

\def\bE{\bm{E}}
\def\bI{\bm{I}}
\def\bL{\bm{L}}

\def\bX{\bm{X}}
\def\bY{\bm{Y}}

\def\bdelta{\bm{\delta}}
\def\bvarepsilon{\bm{\varepsilon}}
\def\bmu{\bm{\mu}}
\def\bpi{\bm{\pi}}
\def\btau{\bm{\tau}}

\def\bzero{\bm{0}}
\def\bone{\bm{1}}

\def\diag{\mathsf{diag}}

\usepackage{todonotes}
\usepackage{geometry}
\newtheorem{theorem}{Theorem}
\newtheorem{lemma}[theorem]{Lemma}

\author{
CI Hameni Nkwayep$^{1}$,
Julien Arino$^{2*}$,
PM Tchepmo Djomegni$^{1}$\\[1mm]
$^{1}$School of Mathematics and Statistical Sciences, North-West University, South Africa\\
$^{2}$Department of Mathematics, University of Manitoba, Winnipeg,
Manitoba, Canada\\
$^{*}$Corresponding author: \texttt{your.email@domain}
}
\date{\today}
\title{A model with exposure in the epidemiological sense\\ Part 1 -- Base model}
\begin{document}


\maketitle

\begin{abstract}
We explore a model of infectious disease spread that incorporates exposure to the pathogen in the classic epidemiological acceptation of the term, i.e., a contact with an infectious individual has taken place but the infection has not necessarily been acquired.
The model also includes a (discrete) age of infection structure, allowing to implicitly describe the viral load of infected individuals and in turn, to describe the probability of developing an infection as a function of the viral load of the infectious contacts.
\end{abstract}

{\bf Keywords}: exposure; contact; age of infection; viral load

\section{Introduction}
With the notable exception of Daniel Bernoulli \cite{bernoulli1760}, much of what is now called \emph{mathematical epidemiology} was, in early days, in the hands of physicians such as Ross \cite{ross1911prevention,ross1915some,ross1916application}, Kermack and McKendrick \cite{kermackmckendrick1927,kermackmckendrick1932,kermackmckendrick1937,kermackmckendrick1939} or MacDonald \cite{macdonald1952analysis,macdonald1953analysis,macdonald1957epidemiology}.
Then in the 1960s and 1970s, the type of models obtained when considering the spread of infectious diseases drew the attention of some of the most prominent mathematical biologists of the time.
This led to fantastic advancements in the comprehension of the dynamics of these systems.
However, this also had one particular unintended consequence: someone (probably Ken Cooke \cite{burke2024origins}) called \emph{exposed} individuals having contracted the disease but not yet infectious to others.
Thus were born SEI models and all their variations and complexifications that have since become hallmarks of mathematical epidemiology.

However, this definition of exposure flies in the face of years of ``classical'' epidemiology and public health science, where exposure \emph{does not} assume that transmission took place, just that a \emph{potentially} infecting contact took place \cite{MilwidSteriuArinoHeffernanEtAl2016}.
Some authors, including one of the authors of the present paper, have tried to get the community to use the term \emph{latently infected} and the letter L instead of \emph{exposed} and the letter E.
These efforts have been mostly unsuccessful, especially since many classical epidemiologists do understand the distinction and use the term SEIR while meaning SLIR.

So the quixotic quest to correct this wrong could probably be forgone.
However, this terminological confusion has led to some modelling mistakes or imprecisions.
The most classic error comes when modelling contact tracing leading to isolation, which is further compounded by the fact that many models confuse isolation and quarantine, with the former typically involving detected cases and the latter being indiscriminate.
Regardless of whether they correctly call it isolation or incorrectly call it quarantine, many models assume that isolation results in a fraction of the exposed getting isolated.
While this is implicitly correct for a fraction of the population since individuals exposed in the classic mathematical epidemiology sense were indeed exposed in the classic epidemiology sense, this means that only individuals having acquired the infection are isolated.
In ``real life'', unless some positive test is required before someone is asked to isolate, this is not true.
This automatically implies that if conducting a cost-benefit analysis of the value of isolation measures, the benefit of contact screening is always overestimated since the denominator of the population the measure is applied to is smaller.

Motivated by these considerations, we ponder in the present work the question of exposure.
Specifically, we formulate and analyse a model in which exposure is in the classical epidemiological sense.
A thorough mathematical analysis is conducted in appendix.
Then we consider numerically properties of the model, focusing in particular on differences with a corresponding classic model where ``E means L''.

\section{The model}
\label{sec:math-model}

Before proceeding further, let us set some notation.
By default, vectors are assumed to be column vectors; however, we only indicate directionality if omitting it leads to confusion.
If $\ba,\bb$ are two vectors, we denote $(\ba,\bb)$ the vector $(\ba^T,\bb^T)$.
Given a $k$-vector $\ba$, $\diag(\ba)$ is the $k\times k$ diagonal matrix with elements of $\ba$ on its main diagonal.
We denote $\bone_k=(1,\ldots,1)$ the $k$-vector of all ones, $\mathbb{I}_k$ the $k\times k$ identity matrix, $\circ$ the Hadamard product and $\langle\ba,\bb\rangle$ the dot product of $\ba$ and $\bb$.
Regarding collections of zeros, we write $\bzero_{m\times n}$ the $m\times n$ zero matrix and $\bzero_{k}$ the vector with $k$ entries, bearing in mind the previous remark about orientation.
For state variables, we typically omit time dependence; vectors of state variables are denoted using bold letters.

\subsection{State variables}
\label{sec:model-state-variables}

We consider a population divided into compartments based on the epidemiological status of individuals.
Susceptible individuals $S$ are susceptible to the disease.
Corresponding to the susceptible compartment is a group of \emph{exposed} compartments $E$ containing individuals having been exposed to the pathogen when they were in the susceptible compartment; see below.
If exposure leads to infection, then exposed individuals progress into one of the two infected and infectious compartments $I$ or $A$ depending on whether they become \emph{symptomatic} or \emph{asymptomatic}, respectively.
Otherwise, exposed individuals revert to being susceptible.
Infected compartments are further subdivided.
A precise explanation is given in Section~\ref{sec:model-assumptions}; for now, it suffices to say that both $I$ and $A$ are divided into $n$ compartments $I_1,\ldots,I_n$ and $A_1,\ldots,A_n$.
We often write them as vectors $\bI=(I_1,\ldots,I_n)$ and $\bA=(A_1,\ldots,A_n)$.
Finally, if they do not die as a result of the infection, infected individuals progress to the (single) \emph{recovered} compartment $R$ where they enjoy temporary disease-induced immunity to reinfection.

To accurately capture the outcome of an exposure, which depends on the characteristics of the infectious contact, we track the specific source of exposure. 
We denote $E_k^X$ the compartment containing individuals who were exposed via contact with an infectious individual in compartment $X_k$, where $X \in \{I, A\}$ and $k=1,\ldots,n$.
It is convenient in places to use a vector notation for these $2n$ compartments, so we define
\[
    \bE = \left(E_1^I, \ldots, E_n^I, E_1^A, \ldots, E_n^A\right).
\]
The total exposed population is $E = \langle \bone_{2n},\bE \rangle$. 
It is also useful later to refer to infected compartments only, so we define
\[
    \bY = \left(I_1,\ldots,I_n,A_1,\ldots,A_n\right)=(\bI,\bA).
\]
The full state variable vector is
\[
    \bX =(S,\bE,\bI,\bA,R)= \left(S, \bE, \bY, R\right).
\]
The total population is given by
\begin{equation}
\label{eq:total}
N=S+ E +\sum_{i=1}^nI_i+\sum_{i=1}^nA_i+R = \langle\bone_{4n+2},\bX\rangle.
\end{equation}

\subsection{Model assumptions}
\label{sec:model-assumptions}

\paragraph{Basic demography.} Recruitment into the population is to the susceptible compartment $S$; there is no vertical transmission of the disease.
Recruitment occurs at the constant rate $b$.
All compartments are subject to natural death, which occurs at the \emph{per capita} rate $d$.

\paragraph{Times of sojourn in infected compartments are ``essentially'' Erlang-distributed.}
As mentioned in Section~\ref{sec:model-state-variables}, infected compartments are further subdivided into $n$ compartments.
The reason for this subdivision is so that the time spent in these compartments is Erlang distributed; see, e.g., \cite[Section 3.1]{Arino2020a}.
We assume that the rates of progression through and out of symptomatic and asymptomatic compartments are $\gamma_I$ and $\gamma_A$, respectively.
We also assume that all infectious individuals are subject to natural death at the \emph{per capita} rate $d$.
This means that individuals in $\bI$ compartments are subject to competing exponentially distributed risks with parameters $\gamma_I$ and $d$ and, as a consequence, the time of sojourn in each $I_i$ compartment is exponentially distributed with parameter $\gamma_I+d$.
Likewise, times of sojourn in compartments $A_i$ are exponentially distributed with parameters $\gamma_A+d$.
As a consequence, the times of sojourn in the chains of $\bI$ and $\bA$ compartments, in the absence of other flows, are Erlang distributed with shape parameters $n$ and scale parameters $\gamma_I+d$ and $\gamma_A+d$, respectively.
This implies that their means are $n/\gamma_I$ and $n/\gamma_A$ and their variances are $n/(\gamma_I+d)^2$ and $n/(\gamma_A+d)^2$, respectively.

\begin{figure}[htbp]
    \centering
    \def\hhskip{*2.5}
    \def\vvskip{*4}
    \begin{tikzpicture}[scale=0.65,
    every node/.style={transform shape},
    auto,
    box/.style={minimum size=1.5cm,
    draw=black, very thick},
    draw, circle]
    
    \draw [black,rounded corners, thick, fill=red!50] (0.15\hhskip,0.75\vvskip) -- (5.5\hhskip,0.75\vvskip) -- (5.5\hhskip,0.25\vvskip) -- (0.15\hhskip,0.25\vvskip) -- cycle;
    \draw [black,rounded corners, thick, fill=red!50] (0.15\hhskip,-0.25\vvskip) -- (5.5\hhskip,-0.25\vvskip) -- (5.5\hhskip,-0.75\vvskip) -- (0.15\hhskip,-0.75\vvskip) -- cycle;
    \draw [black,rounded corners, thick, fill=red!20] (-1.8\hhskip,1.12\vvskip) -- (-0.6\hhskip,1.12\vvskip) -- (-0.6\hhskip,-1.12\vvskip) -- (-1.8\hhskip,-1.12\vvskip) -- cycle;

    \node [box, fill=gray!20] at (-1.2\hhskip,0.9\vvskip) (EI1) {$E_1^I$};
    \node [box, fill=gray!20] at (-1.2\hhskip,0.22\vvskip) (EIn) {$E_n^I$};
    \draw [thick, dotted] (EI1) -- (EIn);

    \node [box, fill=gray!20] at (-1.2\hhskip,-0.22\vvskip) (EA1) {$E_1^A$};
    \node [box, fill=gray!20] at (-1.2\hhskip,-0.9\vvskip) (EAn) {$E_n^A$};
    \draw [thick, dotted] (EA1) -- (EAn);

    \node [box, fill=green!40] at (6.5\hhskip,0\vvskip) (R) {$R$};
    
    \node [box, fill=gray!20] at (0.65\hhskip,0.5\vvskip) (I1) {$I_{1}$};
    \node [box, fill=gray!20] at (2.75\hhskip,0.5\vvskip) (Ik) {$I_{k}$};
    \node [box, fill=gray!20] at (5\hhskip,0.5\vvskip) (In) {$I_{n}$};
    \coordinate (I2) at (1.3\hhskip,0.5\vvskip);
    \coordinate (I3) at (1.4\hhskip,0.5\vvskip);
    \coordinate (Ikm2) at (1.9\hhskip,0.5\vvskip);
    \coordinate (Ikm1) at (2\hhskip,0.5\vvskip);
    \coordinate (Ikp1) at (3.55\hhskip,0.5\vvskip);
    \coordinate (Ikp2) at (3.65\hhskip,0.5\vvskip);
    \coordinate (Inm2) at (4.15\hhskip,0.5\vvskip);
    \coordinate (Inm1) at (4.25\hhskip,0.5\vvskip);
    
    \node [box, fill=gray!20] at (0.65\hhskip,-0.5\vvskip) (A1) {$A_{1}$};
    \node [box, fill=gray!20] at (2.75\hhskip,-0.5\vvskip) (Ak) {$A_{k}$};
    \node [box, fill=gray!20] at (5\hhskip,-0.5\vvskip) (An) {$A_{n}$};
    \coordinate (A2) at (1.3\hhskip,-0.5\vvskip);
    \coordinate (A3) at (1.3\hhskip,-0.5\vvskip);
    \coordinate (Akm2) at (1.9\hhskip,-0.5\vvskip);
    \coordinate (Akm1) at (2\hhskip,-0.5\vvskip);
    \coordinate (Akp1) at (3.55\hhskip,-0.5\vvskip);
    \coordinate (Akp2) at (3.65\hhskip,-0.5\vvskip);
    \coordinate (Anm2) at (4.15\hhskip,-0.5\vvskip);
    \coordinate (Anm1) at (4.25\hhskip,-0.5\vvskip);
    
    \coordinate[above=0.4\vvskip of I1] (deathI1);
    \coordinate[above=0.4\vvskip of Ik] (deathIk);
    \coordinate[above=0.4\vvskip of In] (deathIn);
    \coordinate[below=0.4\vvskip of A1] (deathA1);
    \coordinate[below=0.4\vvskip of Ak] (deathAk);
    \coordinate[below=0.4\vvskip of An] (deathAn);
    
    \path [line, thick] (EI1) to (I1);
    \path [line, thick] (EI1) to (A1);
    \path [line, thick] (EIn) to (I1);
    \path [line, thick] (EIn) to (A1);
    \path [line, thick] (EA1) to (I1);
    \path [line, thick] (EA1) to (A1);
    \path [line, thick] (EAn) to (I1);
    \path [line, thick] (EAn) to (A1);
    
    \path [line, thick] (I1) to node [midway, above] (TextNode) {$\gamma_I I_1$} (I2);
    \path [line, thick, dashed] (I3) to (Ikm2);
    \path [line, thick] (Ikm1) to node [midway, above] (TextNode) {$\gamma_II_{k-1}$} (Ik);
    \path [line, thick] (Ik) to node [midway, above] (TextNode) {$\gamma_II_{k}$} (Ikp1);
    \path [line, thick,dashed] (Ikp2) to (Inm2);
    \path [line, thick] (Inm1) to node [midway, above] (TextNode) {$\gamma_I I_{n-1}$} (In);
    \path [line, thick] (In) to node [near end, above] (TextNode) {$\gamma_I I_{n}$} (R);
    
    \path [line, thick] (A1) to node [midway, above] (TextNode) {$\gamma_A A_1$} (A2);
    \path [line, thick, dashed] (A3) to (Akm2);
    \path [line, thick] (Akm1) to node [midway, above] (TextNode) {$\gamma_AA_{k-1}$} (Ak);
    \path [line, thick] (Ak) to node [midway, above] (TextNode) {$\gamma_AA_{k}$} (Akp1);
    \path [line, thick,dashed] (Akp2) to (Anm2);
    \path [line, thick] (Anm1) to node [midway, above] (TextNode) {$\gamma_AA_{n}$} (An);
    \path [line, thick] (An) to node [near end, below] (TextNode) {$\gamma_A A_{n}$} (R);
    
    \path [line, thick] (A1) to node [midway, right] (TextNode) {$\tau_1A_1$} (I1);
    \path [line, thick] (Ak) to node [midway, right] (TextNode) {$\tau_kA_k$} (Ik);
    \path [line, thick] (An) to node [midway, right] (TextNode) {$\tau_nA_n$} (In);
    
    \path [line, thick] (I1) to node [midway, right] (TextNode) {$(d+\mu_1)I_1$} (deathI1);
    \path [line, thick] (Ik) to node [midway, right] (TextNode) {$(d+\mu_k)I_k$} (deathIk);
    \path [line, thick] (In) to node [midway, right] (TextNode) {$(d+\mu_n)I_n$} (deathIn);
    \path [line, thick] (A1) to node [midway, right] (TextNode) {$dA_1$} (deathA1);
    \path [line, thick] (Ak) to node [midway, right] (TextNode) {$dA_k$} (deathAk);
    \path [line, thick] (An) to node [midway, right] (TextNode) {$dA_n$} (deathAn);
    \end{tikzpicture}
    \caption{Flows into, within and out of infectious compartments $I$ and $A$.
    Only rates related to flows within or out from infectious compartments are shown.}
    \label{fig:flow-within-infectious}
\end{figure}

In theory, we could suppose that the numbers of compartments differ between $\bI$ and $\bA$, but this leads to unnecessary complications.
Indeed, since we allow individuals to move from $\bA$ to $\bI$, having discrepant compartment numbers would require to connect the relevant $A_i$ to the corresponding $I_j$.
Since $\gamma_I$ and $\gamma_A$ are not necessarily equal, this means we are connecting compartments in which individuals have spent potentially different amounts of time.
However, given the imprecision involved in describing sojourn times using ordinary differential equations, we feel this is an acceptable simplification.
As a consequence, we use the position (index) in the chain as a proxy for viral load of individuals in the corresponding compartment.

Adding rates of movement from $\bA$ to $\bI$ and of disease-induced death in $\bI$ that depend on $i=1,\ldots,n$ means that the $I_i$ and $A_i$ have sojourn times that are no longer identically distributed, breaking the \emph{strict} Erlang distributed nature of the overall process.
However, sojourn times in $\bI$ and $\bA$ retain a \emph{generalised Erlang} distribution; see Section~\ref{sec:viral-load-mapping}.
By abuse of language, we still call $\bI$ and $\bA$ Erlang chains.

\paragraph{Contacts leading to exposure to the disease.}
In this model, we use \emph{exposure} in the classic epidemiological sense, i.e., the word \emph{exposed} does not (incorrectly \cite{burke2024origins,MilwidSteriuArinoHeffernanEtAl2016}) represent incubation with the disease.
Exposure occurs when a contact takes place between a susceptible and an infectious individual.
Upon exposure, the susceptible individual receives a certain dose of pathogen and the outcome of exposure depends on both the dose received and the capacity of the host to deal with this dose of pathogen.
To accurately capture the outcome of an exposure, which is dependent on the viral dose and characteristics of the infectious contact, we structure the exposed population to retain information about the source of infection. 
Therefore, there are $2n$ sub-compartments in $\bE$, with $n$ corresponding to the different compartments in $\bI$ and $n$ corresponding to those in $\bA$.

Individuals enter compartment $E_k^X$ ($X \in \{I, A\}$) strictly through contact with an infectious individual in the corresponding state $X_k$. 
Letting $c_k^X$ be the specific contact rate, the flow of newly exposed individuals into $E_k^X$ is $c_k^X X_k {S}/{N}$, i.e., we assume proportional exposure.
We call this intensity \emph{force of exposure} to distinguish it from the usual force of infection, since the latter implies that an infection does occur whereas we assume no such thing.

The \emph{total} force of exposure is the sum over all infectious compartments,
\begin{equation}
\label{eq:force-of-exposure}
\lambda_\text{exp}=\dfrac{\sum_{k=1}^n c^I_kI_k
	+\sum_{k=1}^n c^A_kA_k}{N}= \frac{\left\langle\bc,\bY\right\rangle}{\langle\bone_{4n+2},\bX\rangle},
\end{equation}
where $\bc = \left(c_1^I,\ldots,c_n^I,c_1^A,\ldots,c_n^A\right)$.

Contact rates for asymptomatically infectious individuals would not vary much as $i$ varies, because such individuals are often unaware of their status.
On the other hand, the contact rates of symptomatically infectious might vary, because sickness makes them less likely to seek contacts or some social distancing measure is imposed onto them.

\paragraph{Exposed compartments to infectious or susceptible compartments.}
Once exposed, individuals spend on average $1/\varepsilon_k^X$ time units in their respective $E_k^X$ compartment, $X\in\{I,A\}$ and $k=1,\ldots,n$. 
After this ``delay'', they either develop the infection or revert to being susceptible.
In vector form, we denote $\bvarepsilon=(\varepsilon_1^I,\ldots,\varepsilon_n^I,\varepsilon_1^A,\ldots,\varepsilon_n^A)$.

Let $\delta_k^X \in [0,1]$ be the proportion of individuals leaving $E_k^X$ who develop the infection. 
The complementary fraction $(1-\delta_k^X)$ return to the susceptible pool $S$. 
Thus, progression to infection is dependent on the viral dose received during the contact and the capacity for a host to rapidly clear pathogens \cite{PALUDAN2024115070}.
We denote $\bdelta=(\delta_1^I,\ldots,\delta_n^I,\delta_1^A,\ldots,\delta_n^A)$ the corresponding vector.

\paragraph{Determining the proportion of symptomatic infections.}
Among exposed individuals who progress towards infection, the clinical outcome depends on the source of the exposure. 
We denote $\pi_k^X \in [0,1]$ the proportion of newly infected individuals originating from $E_k^X$ who develop a symptomatic infection. 
Consequently, they enter $I_1$ at a rate $\pi_k^X \varepsilon_k^X \delta_k^X E_k^X$, while those remaining asymptomatic enter $A_1$ at a rate $(1-\pi_k^X)\varepsilon_k^X \delta_k^X E_k^X$.

Denoting $\bpi=(\pi_1^I,\ldots,\pi_n^I,\pi_1^A,\ldots,\pi_n^A)$, the vectors describing the rate of progression of exposed individuals into the symptomatic and asymptomatic compartments are $\bp_I=\bpi\circ\bvarepsilon\circ\bdelta$ and $\bp_A=(\bone_{2n}-\bpi)\circ\bvarepsilon\circ\bdelta$, respectively.

\paragraph{Transition from asymptomatic to symptomatic infections.}
The transition from $A$ to the symptomatic class $I$ reflects the moment when viral replication is sufficient to trigger a clinical immune response and the onset of symptoms. 
This progression is not simultaneous for all individuals: some may already be transmitting the virus before detection by PCR or antigen testing, including at the onset of symptoms \cite{He2020,Kucirka2020}. 
The model therefore considers two pathways out of $\bA$: $\bA$ to $\bI$ for presymptomatic individuals who develop symptoms and $A_n$ to $R$ for asymptomatic or presymptomatic individuals who recover without ever developing symptoms. 
This distinction allows to account for the biological dynamics of viral load, the variability of contagiousness and the existence of early false negatives, while remaining consistent with clinical and epidemiological data \cite{Byambasuren2020,Sayampanathan2021}.
The rate at which individuals with an asymptomatic infection in compartment $A_k$ progress to the corresponding compartment $I_k$ is denoted $\tau_k$.
The corresponding vector is denoted $\btau=(\tau_1,\ldots,\tau_n)\in\IR^n$. 

\paragraph{Disease-induced death in symptomatic infections.}
Symptomatic infections are more severe than asymptomatic ones.
For this reason, we assume that some level of disease-induced death takes place for individuals in $\bI$ compartments.
This rate increases with the viral load \cite{pujadas2020sars,towner2004rapid}.
To model this, we let disease-induced death in compartment $I_i$, $i=1,\ldots,n$, occur at the \emph{per capita} rate $\mu_i$, with the vector of disease-induced death rates denoted $\bmu=(\mu_1,\ldots,\mu_n)$.

\paragraph{Temporary immunity.}
If they make it to the end of the $\bI$ or $\bA$ Erlang chains without dying from the disease or natural causes, individuals progress to the R compartment, where they enjoy temporary immunity to the disease.
They lose this temporary immunity after an average $1/\nu$ time units, at which point they move back into the susceptible compartment.

\subsection{The model}
\label{sec:modelling-model}

\begin{figure}[htbp]
\centering
\def\hhskip{*4.2} 
\def\vvskip{*2} 
\begin{tikzpicture}[scale=0.8, 
        every node/.style={transform shape},
        auto,
        param/.style={draw, fill=white, rectangle, inner sep=2pt, align=center},
        cloud/.style={minimum width=30pt, draw, circle, fill=gray!20},
        erlang/.style={minimum width=60pt, minimum height=30pt, draw, rounded corners},
        distrib/.style={minimum width=30pt, minimum height=60pt, draw, rounded corners},
        line/.style={draw, -latex'}]    
    \node[cloud, fill=blue!30] (S) at (0\hhskip,0\vvskip) {$S$};
    \node[distrib, fill=red!10] (E) at (1\hhskip,0\vvskip) {$\bE$};
    \node[erlang, fill=red!80] (I) at (2\hhskip,1\vvskip){$\bI$};
    \node[erlang, fill=red!80] (A) at (2\hhskip,-1\vvskip){$\bA$};
    \node[cloud, fill=green!40] (R) at (3\hhskip,0\vvskip){$R$};
    \node [cloud, left=0.3\hhskip of S, draw=none, fill=none] (birthS) {};
    \node [cloud, shift={(270:0.8\hhskip)}, draw=none, fill=none] (deathS) at (S) {};
    \node [cloud, shift={(270:0.8\hhskip)}, draw=none, fill=none] (deathE) at (E) {};
    \node [cloud, shift={(0:0.9\hhskip)}, draw=none, fill=none] (deathI) at (I) {};
    \node [cloud, shift={(0:0.6\hhskip)}, draw=none, fill=none] (deathR) at (R) {};
    \node [cloud, shift={(0:0.9\hhskip)}, draw=none, fill=none] (deathA) at (A) {};
    \draw[->, thick] (birthS) to node[above] {$b$} (S);
    \draw[->, thick] (S) to node[above,sloped] {$dS$} (deathS);
    \draw[->, thick] (E) to node[above,sloped] {$d\bE$} (deathE);
    \draw[->, thick] (I) to node[above,sloped] {$(d\bone_n+\bmu)\circ\bI$} (deathI);
    \draw[->, thick] (A) to node[above,sloped] {$d\bA$} (deathA);
    \draw[->, thick] (R) to node[above] {$dR$} (deathR);
    \draw[->, thick, bend right=10] (S) to node [below,sloped] {$\lambda_{\text{exp}} S$} (E);
    \draw[->, thick, bend right=10] (E) to node [above,sloped] {$(\bone_n-\bdelta)\circ\bvarepsilon\circ\bE$} (S);
    \draw[->, thick] (E) to node [above,sloped] {$\bpi\circ\bdelta\circ\bvarepsilon\circ\bE$} (I);
    \draw[->, thick] (E) to node [above,sloped,midway] {$(\bone_n-\bpi)\circ\bdelta\circ\bvarepsilon\circ\bE$} (A);
    \draw[->, thick] (I) to node [above,sloped] {$\gamma_II_n$} (R);
    \draw[->, thick] (A) to node [above,sloped] {$\gamma_AA_n$} (R);
    \draw[->, thick] (A) to node [sloped,above] {$\btau\bA$} (I);
    \draw[->, thick] (R) to (3\hhskip,1.35\vvskip) to node [above] {$\nu R$} (0\hhskip,1.35\vvskip) to (S);
\end{tikzpicture}
\caption{Flowchart of the model. 
Horizontal rectangular nodes are ``Erlang chains'', i.e., they comprise a number of similar compartments; see Figure~\ref{fig:flow-within-infectious} for details.
The vertical rectangular node is the discrete distribution $\bE=(E_1^I,\ldots,E_n^I,E_1^A,\ldots,E_n^A)$.}
\label{fig:flow-diagram-no-vacc}
\end{figure}

Putting together all these considerations, we obtain the flowchart in Figure~\ref{fig:flow-diagram-no-vacc} and the transmission model takes the form of the following deterministic system of nonlinear ordinary differential equations,
\begin{subequations}
\label{sys:SEIARS}
\begin{align}
\dot S &= b + \sum_{k=1}^n \left[ (1-\delta_k^I)\varepsilon_k^I E_k^I + (1-\delta_k^A)\varepsilon_k^A E_k^A \right] +\nu R -\left(\lambda_{\text{exp}}+d\right)S, \label{sys:SEIARS-dS} \\
\dot E_k^I &= \frac{c_k^I I_k}{N} S - (\varepsilon_k^I+d)E_k^I, 
&&  k=1,\dots,n, \label{sys:SEIARS-dEI} \\
\dot E_k^A &= \frac{c_k^A A_k}{N} S - (\varepsilon_k^A+d)E_k^A, 
&& k=1,\dots,n, \label{sys:SEIARS-dEA} \\
\dot I_1 &= \sum_{k=1}^n \left[ \pi_k^I \delta_k^I\varepsilon_k^I E_k^I + \pi_k^A \delta_k^A\varepsilon_k^A E_k^A \right] + \tau_1 A_1 - (\gamma_I+\mu_1+d)I_1, \label{sys:SEIARS-dI1} \\
\dot I_k &= \gamma_I I_{k-1} + \tau_k A_k - (\gamma_I+\mu_k+d)I_k, && k=2,\dots,n,\label{sys:SEIARS-dIk} \\
\dot A_1 &= \sum_{k=1}^n \left[ (1-\pi_k^I) \delta_k^I \varepsilon_k^I E_k^I + (1-\pi_k^A) \delta_k^A\varepsilon_k^A E_k^A \right] - (\gamma_A+\tau_1+d)A_1, \label{sys:SEIARS-dA1} \\
\dot A_k &= \gamma_A A_{k-1} - (\gamma_A+\tau_k+d)A_k, && k=2,\dots,n, \label{sys:SEIARS-dAk} \\
\dot R &= \gamma_I I_n + \gamma_A A_n - (\nu+d)R. \label{sys:SEIARS-dR}
\end{align}
\end{subequations}

The summary and definitions of variables and parameters of model are given in Tables~\ref{tab:variables} and \ref{tab:parameters}, respectively.

\begin{table}[htbp]
\centering
\begin{tabular}{cl}
\toprule
Variables & Denominations \\
\midrule
$S$ & Susceptible individuals \\
$\bE^I=(E_1^I,\ldots,E_n^I)$ & Individuals exposed to symptomatically infectious individuals \\
$\bE^A=(E_1^A,\ldots,E_n^A)$ & Individuals exposed to asymptomatically infectious individuals \\
$\bE=(\bE^I,\bE^A)$ & Individuals exposed to the pathogen \\
$\bI=(I_1,\ldots,I_n)$ & Symptomatically infectious individuals \\
$\bA=(A_1,\ldots,A_n)$ & Asymptomatically infectious individuals \\
$R$ & Temporarily immune individuals \\
\bottomrule
\end{tabular}
\caption{State variables in \eqref{sys:SEIARS}.}
\label{tab:variables}
\end{table}

\begin{table}
\centering
\begin{tabular}{ll}
\toprule
Parameters & Definitions \\
\midrule
$b$ & Birth rate \\
$d$ & Natural mortality rate \\
$1/\varepsilon_k^X$ & Average time in exposed compartment \\
$\pi_k^X$ & Proportion of exposed susceptible indiv. presenting symptoms \\
$\gamma_I$ & Recovery rate of symptomatic infectious indiv. \\
$\gamma_A$ & Recovery rate of asymptomatic infectious indiv. \\
$\delta_k^X$ & Proportion of exposed susceptible indiv. becoming infectious \\
$\mu_k$ & Disease-induced death rate \\
$\tau_k$ & Rate of asymptomatic indiv. developing symptoms \\
$1/\nu$ & Average duration of disease-induced immunity \\
\bottomrule
\end{tabular}
\caption{Parameters used in \eqref{sys:SEIARS}. 
When present, $X\in\{I,A\}$ and $k=1,\ldots,n$.
Fractions are dimensionless and take values in $[0,1]$. 
All other parameters have units \emph{per unit time}, except for ``birth'', which has units \emph{individual per unit time}.}
\label{tab:parameters}
\end{table}

\section{Mathematical analysis}
\label{sec:mathematical-analysis}

The following results, whose proofs are found in Appendix \ref{app:proofs}, summarise what we know of the behaviour of \eqref{sys:SEIARS}.
The model has the basic properties expected of epidemiological models, in that it is easy to show that it verifies positivity and boundedness of solutions and that its solutions exist and are unique.
This is shown in Appendix~\ref{app:proofs-well-posedness}.
\begin{lemma}
\label{th:positivity}
System \eqref{sys:SEIARS} is a dynamical system in the biologically feasible domain
\begin{equation}
\label{om}
    \Omega:=\left\{ (S,\bE,\bI,\bA,R)\in\IR^{4n+2},\, N\leq N(0)+\dfrac{b}{d}\right\}.
\end{equation}
For each initial condition $\bX(0)\in \Omega$, \eqref{sys:SEIARS} admits a unique maximal solution.
\end{lemma}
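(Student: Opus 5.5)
Our plan has four steps: local existence and uniqueness, forward invariance of the nonnegative orthant, the bound on $N$, and globality of the maximal solution.

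\textbf{Local well-posedness.} The only non-polynomial term in \eqref{sys:SEIARS} is the incidence $c_k^X X_k S/N$. Every other term is linear in the state variables. On the open set $\{N>0\}$ the vector field is $C^1$, hence locally Lipschitz, and the Cauchy--Lipschitz theorem gives a unique maximal solution for each initial condition with $N(0)>0$. To include points where $N$ is small, we use that on the nonnegative orthant $0\le S/N\le 1$. Thus $c_k^X X_k S/N$ extends continuously by $0$ at the origin and is bounded by $c_k^X X_k$. Since $\dot N\ge -(d+\max_i\mu_i)N$, $N$ cannot reach $0$ in finite time from $N(0)>0$, so solutions starting in $\Omega\setminus\{0\}$ remain in $\{N>0\}$.

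\textbf{Positivity.} We show the nonnegative orthant is forward invariant with the quasi-positivity criterion. For each state variable, its derivative is nonnegative whenever that variable is zero and all the others are nonnegative. For example, $E_k^I=0$ gives $\dot E_k^I=c_k^I I_k S/N\ge0$. $I_1=0$ gives $\dot I_1=\sum_k[\cdots]+\tau_1A_1\ge0$. $S=0$ gives $\dot S=b+\sum_k(\cdots)+\nu R>0$, and the remaining variables are similar. Standard results then show that nonnegative initial data give nonnegative solutions. Alternatively, for the linear-in-self equations we can write each variable via an integrating factor, e.g.
\[
E_k^I(t)=E_k^I(0)e^{-(\varepsilon_k^I+d)t}+\int_0^t e^{-(\varepsilon_k^I+d)(t-s)}\frac{c_k^I I_k(s)S(s)}{N(s)}\,ds ,
\]
and argue by contradiction at the first time some component vanishes.

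\textbf{Bound on $N$.} Summing all equations, the exposure flows, the $\bE\to S$ reversion, the $\bE\to\bI,\bA$ flows, the $\bA\to\bI$ transfers, the within-chain transfers, recovery and loss of immunity all cancel. What remains is
\[
\dot N = b - dN - \langle\bmu,\bI\rangle \le b-dN .
\]
Gronwall's lemma (comparison) gives
\[
N(t)\le N(0)e^{-dt}+\frac bd\bigl(1-e^{-dt}\bigr)\le N(0)+\frac bd .
\]
Hence solutions stay in $\Omega$, understood together with the nonnegativity constraint implicit in ``biologically feasible''.

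\textbf{Globality.} Nonnegativity together with this bound on $N$ bounds every component. A maximal solution that is bounded in a region where the vector field is continuous and locally Lipschitz cannot blow up, so it is defined for all $t\ge0$, and \eqref{sys:SEIARS} defines a dynamical system on $\Omega$.

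\textbf{Main obstacle.} The delicate point is the singularity of $\lambda_{\text{exp}}$ at $N=0$ and making positivity rigorous there. We will handle it by the extension and bound in the first step. We also need to interpret $\Omega$ as lying in $\IR_+^{4n+2}$, since the set as literally written is not invariant for arbitrary signs.
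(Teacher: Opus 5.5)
Your proposal is correct and follows essentially the same route as the paper. Positivity comes from showing that no component can cross zero; the paper's first-hitting-time contradiction is your quasi-positivity check. The bound $N(t)\le N(0)+b/d$ comes from $\dot N=b-dN-\langle\bmu,\bI\rangle$ and Gronwall. The $S/N$ singularity is controlled by keeping $N$ away from $0$.

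The only difference is in that last step. The paper uses $b>0$ and $\dot N\ge b-(d+\sum_i\mu_i)N$ to bound $N$ below by a constant $\eta_0$, and then argues that the right-hand side is dominated by a bilinear function. Your bound $N(t)\ge N(0)e^{-(d+\max_i\mu_i)t}$, combined with Cauchy--Lipschitz on $\{N>0\}$, is the cleaner way to get uniqueness and global existence, since being dominated by a Lipschitz function does not by itself give local Lipschitzness.
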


The disease-free equilibrium (DFE) of \eqref{sys:SEIARS} is obtained by assuming that $\bI=\bA=\bzero_n$.
Denote $\bX^0$ that point.
A simple computation shows that at $\bX^0$, all state variables are zero except for $S$.
We write this as
\[
\bX^0:=\left(S^0,\bzero_{4n+1}\right),
\]
where $S^0=b/d$. 
Note that at the disease-free equilibrium, the total population is $N(0) = S^0$ and therefore $S^0/N(0) = 1$.
The behaviour of \eqref{sys:SEIARS} is classic and is governed by the following result.

\begin{theorem}
\label{th:behaviour-fct-R_0}
Define the basic reproduction number $\R_0$ of \eqref{sys:SEIARS} as
\begin{equation}
\R_0 = 
\rho \left( \mathsf{diag}(\bc) 
\begin{pmatrix}
W_I^{-1} & W_I^{-1}\diag(\btau) W_A^{-1} \\
\bzero_{n\times n} & W_A^{-1}\end{pmatrix}
P (\diag(\bvarepsilon) + d\mathbb{I}_{2n})^{-1} \right),
\label{eq:R_0}
\end{equation}
where $\rho(\cdot)$ denotes the spectral radius and $W_I^{-1}$, $W_A^{-1}$ and $P$ take the forms in Appendix~\ref{app:proofs-R_0-LAS}.
Then the disease-free equilibrium $\bX^0$ of \eqref{sys:SEIARS} is globally asymptotically stable when $\R_0\leq 1$ and unstable if $\R_0 > 1$.
In the latter case, a unique endemic equilibrium $\bX^\star$ becomes biologically relevant.
\end{theorem}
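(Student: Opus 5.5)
The plan is to cast \eqref{sys:SEIARS} in the next-generation framework of van den Driessche and Watmough. I take the infected state $\bm{x}=(\bE,\bY)\in\IR^{4n}$. New exposures $c_k^X X_k S/N$ are the only terms counted as new infections. Writing $s=S/N$, the infected subsystem reads
\[
\dot{\bm x}=\big(s\,F-V\big)\bm x,\qquad F=\begin{pmatrix}\bzero_{2n\times 2n} & \diag(\bc)\\ \bzero_{2n\times 2n}&\bzero_{2n\times 2n}\end{pmatrix},
\]
and $V$ is block lower triangular. Its first diagonal block is $\diag(\bvarepsilon)+d\,\mathbb{I}_{2n}$, its off-diagonal block is $-P$, where $P$ holds the rates $\bpi\circ\bdelta\circ\bvarepsilon$ and $(\bone_{2n}-\bpi)\circ\bdelta\circ\bvarepsilon$ into $I_1$ and $A_1$, and its last diagonal block is the Erlang-chain matrix $W=\begin{pmatrix}W_I & -\diag(\btau)\\ \bzero & W_A\end{pmatrix}$. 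Here $W_I$ and $W_A$ are bidiagonal, with diagonals $\gamma_I+\mu_k+d$ and $\gamma_A+\tau_k+d$ and subdiagonal entries $-\gamma_I$ and $-\gamma_A$. $V$ is a nonsingular M-matrix because it is triangular with positive diagonal and nonpositive off-diagonal entries. A block inversion gives $V^{-1}$, and $FV^{-1}$ has a single nonzero block row. Its spectral radius coincides with that of the $2n\times 2n$ product $\diag(\bc)W^{-1}P(\diag(\bvarepsilon)+d\mathbb{I}_{2n})^{-1}$, which is exactly \eqref{eq:R_0}, with $W^{-1}$ having the displayed upper-triangular block form. At $\bX^0$ we have $s=1$, so Theorem~2 of van den Driessche--Watmough gives local asymptotic stability for $\R_0<1$ and instability for $\R_0>1$. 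The disease-free equilibrium satisfies the required hypotheses: the disease-free subsystem $\dot S=b-dS$, $\dot R=-(\nu+d)R$ is linear and stable.

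For global stability when $\R_0\le 1$ I would use the Lyapunov construction of Shuai and van den Driessche. Since $S\le N$, we have $s\le 1$ on $\Omega$, which is positively invariant by Lemma~\ref{th:positivity}; hence $\dot{\bm x}\le (F-V)\bm x$ componentwise. I would assume the contact and transition structure makes $V^{-1}F$ irreducible on its nontrivial block; otherwise a reduction to its irreducible part is needed. Let $\bm w^T\ge 0$ be a left Perron vector of $V^{-1}F$ for the eigenvalue $\R_0$, and set $L=\bm w^TV^{-1}\bm x\ge 0$. Then
\[
\dot L=\bm w^TV^{-1}(sF-V)\bm x\le (\R_0-1)\bm w^T\bm x\le 0 .
\]
By LaSalle's invariance principle, solutions approach the largest invariant set in $\{\dot L=0\}$. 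The main obstacle is identifying this set when $\R_0=1$, because then $\dot L=0$ also allows $s=1$ with $\bm x$ nonzero. In that case I would use $\bm w^TV^{-1}F\bm x\,(s-1)=0$ together with $s=1$, which forces $\bE=\bY=\bzero$ because $N-S\ge E$. Once the invariant set reduces to $\bm x=0$, the linear $(S,R)$ dynamics converge to $\bX^0$. Combined with local stability, this gives global asymptotic stability.

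For $\R_0>1$, an endemic equilibrium must satisfy $(s^\star F-V)\bm x^\star=0$ with $\bm x^\star>0$. Equivalently, $V^{-1}F\bm x^\star=\bm x^\star/s^\star$, so $1/s^\star$ is the Perron root of the irreducible nonnegative matrix $V^{-1}F$. This forces $s^\star=1/\R_0$, which lies in $(0,1)$ precisely when $\R_0>1$, and $\bm x^\star=\kappa\bm v$ with $\bm v$ the positive Perron eigenvector. The remaining equations for $S$, $R$ and $N$ are then linear in $(S^\star,\kappa)$. The $R$ equation gives $R^\star$ linear in $\kappa$. The condition $S^\star=s^\star N^\star$, with $N^\star=S^\star+\kappa\langle\bone,\bm v\rangle+R^\star$, gives $S^\star$ proportional to $\kappa$. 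The $S$ equation then reads $b=\kappa\,C$ for an explicit constant $C$. I would check that $C>0$ using the balance of $\dot N=b-dN-\langle\bmu,\bI\rangle$, and this yields a unique positive $\kappa$. Hence there is a unique endemic equilibrium, and it exists exactly when $\R_0>1$.
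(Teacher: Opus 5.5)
Your proposal is correct. The next-generation computation of $\R_0$ and the local stability/instability step coincide with the paper's. The two substantial parts, however, take different routes.

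For global stability, the paper writes the infected subsystem in Kamgang--Sallet form $\dot x_2=A_2(x)x_2$. It bounds the Metzler matrix $A_2(x)$ by $\overline{A}_2=F-V$ using $S/N\le 1$, and cites their Theorem~4.3 without verifying all of its hypotheses (irreducibility of $A_2(x)$ in particular). You use the same inequality $s\le 1$ but write down the Shuai--van den Driessche function $L=\bm{w}^TV^{-1}\bm{x}$, with $\dot L=(s\R_0-1)\bm{w}^T\bm{x}$. This is self-contained and makes the $\R_0=1$ case explicit through LaSalle.

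For the endemic equilibrium, the paper eliminates everything in favour of $\lambda_1^{I\star},\lambda_1^{A\star}$ and solves a quadratic for their ratio. It then identifies $B^\star=N^\star/S^\star$ with $\R_0$ through the characteristic polynomial of the $2\times 2$ matrix $H_r$. Your Perron--Frobenius argument gives $S^\star/N^\star=1/\R_0$ at once. It leaves a single linear equation for the scale $\kappa$, whose solvability with $\kappa>0$ follows from $b=dN^\star+\langle\bmu,\bI^\star\rangle$. This sidesteps two loose points in the paper's computation:
the paper must decide which root of the characteristic polynomial $B^\star$ is, and ``$B^\star>0$'' does not exclude the smaller root when both are positive;
and it never checks the sign of the denominator $d+K_{14}\lambda_1^{I\star}+K_{15}\lambda_1^{A\star}$ in its formula for $S^\star$.
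The paper's route, in exchange, yields explicit parameter formulas for every component of $\bX^\star$.

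There are two points to tighten.

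First, $V^{-1}F$ is reducible in an essential way. Its $\bE$-columns vanish, so any left eigenvector $\bm{w}$ for $\R_0>0$ is zero on $\bE$. Hence $\bm{w}^T\bm{x}=0$ only gives $\bY=\bzero$. Before passing to the $(S,R)$ dynamics you need the invariance step: on $\{\bY\equiv\bzero\}$, $\bzero=\dot{\bY}=P\bE$, and since $\bp_I+\bp_A=\bdelta\circ\bvarepsilon$ this gives $\langle\bdelta\circ\bvarepsilon,\bE\rangle=0$, so $\bE=\bzero$. Likewise, in the endemic part, Perron--Frobenius should be applied to the $\bY$-block $G=W^{-1}P(\diag(\bvarepsilon)+d\II_{2n})^{-1}\diag(\bc)$ (your $W$ is the paper's $M$). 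Then $\bY^\star$ is a nonnegative nonzero eigenvector of $G$ for the eigenvalue $1/s^\star$. Irreducibility of $G$ gives both $1/s^\star=\rho(G)=\R_0$ and uniqueness of the direction $\bm{v}$.

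Second, at $\R_0=1$ the van den Driessche--Watmough theorem provides no local stability, so ``combined with local stability'' does not cover that case. Obtain stability from $L$ itself instead. Under your positivity assumptions $\bm{w}^TV^{-1}$ is componentwise positive, so $L$ is comparable to $\|\bm{x}\|_1$. The $(S,R)$ equations are then a stable linear system with small forcing.
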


The proof of this result involves several different components and is deferred to Appendix~\ref{app:proofs}.
Computation of $\R_0$ and considerations on the local stability or instability of the disease-free equilibrium is carried out in Appendix~\ref{app:proofs-R_0-LAS}.
Of note here is that in the $\R_0$ computation, we define the infected state vector as $(\bE,\bI,\bA)$, thereby explicitly including the exposed compartments $\bE$ among infected states.
This may not seem obvious at first glance, but consider the following: because $\bm{\delta}$ is defined as a vector of constants, if we did not include $\bE$ among the infected states, there could be an exogenous flow from $\bE$ to $I_1$ and $A_1$ even in absence of infectious individuals, which would not be realistic.

Proof that the disease-free equilibrium is globally asymptotically stable when $\R_0\leq 1$ uses the method of \cite{KamgangSallet2008} and is found in Appendix~\ref{app:proofs-GAS-DFE}.
Existence of the unique endemic equilibrium $\bX^\star$ when $\R_0>1$ is carried out in Appendix~\ref{app:existence-uniqueness-EEP}.

\section{Computational considerations}
\label{sec:computational-work}

\subsection{Parameters}
\label{sec:computational-parameters}

Table~\ref{tab:parameters-values} details the numerical values and ranges used for the baseline simulations throughout this section, corresponding to the variables defined in Table~\ref{tab:parameters}. Note that the transmission parameters $c_i^X$ are specific to the $X_i$ sub-compartments and reflect a baseline ratio.
To ensure consistency, all simulations are standardised to a baseline population size of $100,000$ individuals ($N = b/d$) matching the ``\emph{per} 100,000 individuals'' frequently used in epidemiological work.

\begin{table}[htbp]
\centering
\begin{tabular}{llp{8.5cm}}
\toprule
Parameters & Ranges & Notes \\
\midrule
$b$ & $4.0$ & Recruitment rate ($N = b/d$) \\
$d$ & $0.00004$ & Natural mortality rate (life expectancy $\sim 70$ years) \\
$1/\varepsilon_i^I$ & $[1, 5]$ & Average days in exposed compartments (symptomatic source) \\
$1/\varepsilon_i^A$ & $[1, 5]$ & Average days in exposed compartments (asymptomatic source) \\
$\pi_i^I$ & $[0.4, 0.6]$ & Proportion of exposed (symptomatic source) individuals presenting symptoms \\
$\pi_i^A$ & $[0.4, 0.6]$ & Proportion of exposed (asymptomatic source) individuals presenting symptoms \\
$1/\gamma_I$ & $10$ & Average days in symptomatic infectious stages \\
$1/\gamma_A$ & $10$ & Average days in asymptomatic infectious stages \\
$\delta_i^I$ & $[0.4, 0.8]$ & Proportion of exposed to symptomatic individuals becoming infectious \\
$\delta_i^A$ & $[0.2, 0.4]$ & Proportion of exposed to asymptomatic individuals becoming infectious \\
$\mu_i$ & $[0.0005, 0.002]$ & Disease-induced death rate per stage \\
$\tau_i$ & $[0.005, 0.02]$ & Rate of asymptomatic individuals developing symptoms per stage \\
$\nu$ & $0.0$ & Rate of loss of disease-induced immunity \\
$c_i^I$ & $[0.002, 0.01]$ & Transmission coefficients for symptomatic contacts \\
$c_i^A$ & $[0.001, 0.005]$ & Transmission coefficients for asymptomatic contacts \\
\bottomrule
\end{tabular}
\caption{Baseline parameter values used in the numerical simulations for a population of $N=100,000$. All time-dependent parameters have units \emph{per day}. Fractions are dimensionless.}
\label{tab:parameters-values}
\end{table}

\subsection{Sensitivity analysis}
\label{sec:sensitivity-analysis}

To understand how parameters influence various quantities, we first proceed to a partial rank correlation coefficient analysis of the response of several quantities in the model.
We sample 1 million values of parameters in the ranges shown in Table~\ref{tab:parameters-values} using Latin hypercube sampling (LHS) and compute the partial rank correlation coefficient of the response to changes.

\begin{figure}[htbp]
        \includegraphics[width=\linewidth]{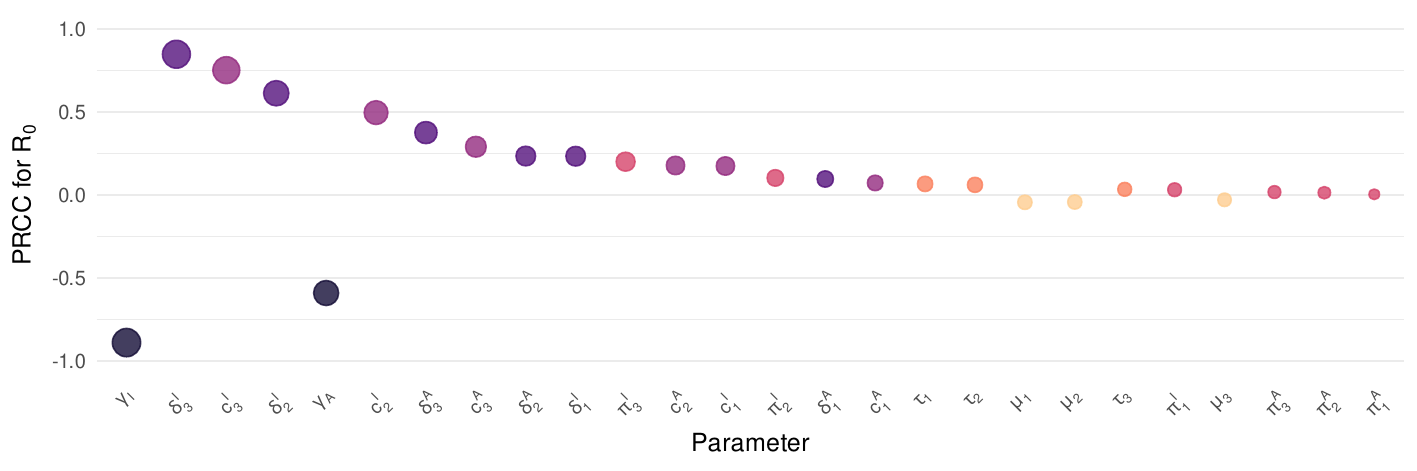}
    \caption{PRCC of $\R_0$ to model parameters.}
    \label{fig:PRCC-R0}
\end{figure}
In Figure~\ref{fig:PRCC-R0}, we show the response of $\R_0$ on parameter changes.
First, note that we are not showing the $\varepsilon_k^X$: the largest PRCC of any of them is 0.0009, meaning that their influence on $\R_0$ is negligible.
As a consequence, to simplify the numerics, we consider from now on a single value $\varepsilon$.
The second interesting observation is that in ``parameter groups'' $c_k^X$, $\delta_k^X$, $\pi_k^X$ and the not shown $\varepsilon_k^X$, for a given $X\in\{I,A\}$, the largest magnitude PRCC is always for $k=3$, then $k=2$, then $k=1$.
The situation is the opposite for $\mu_k$ and $\tau_k$, where the rates in the early compartments in the Erlang chains influence $\R_0$ more than later ones.

\subsection{Comparison with models with ``exposed'' compartment}
\label{sec:comparison-SEIRS}
We introduced \eqref{sys:SEIARS} pondering if it can ``correct some of the wrongs'' of classic SEIRS models, so one obvious question is \emph{how different are the conclusions of the two types of models}?
We investigate the question mostly computationally. 
We consider a ``classic'' SLIARS model, which is detailed in Appendix~\ref{app:SLIARS-model}.
In order to ``compare comparable models'', we do retain the Erlang structure of $\bI$ and $\bA$.
The model takes the form \eqref{sys:SLIARS}.
Note that we ``re-use'' existing parameters: instead of being interpreted as the mean duration of the exposure period,  in the SLIARS model $1/\varepsilon$ is the comparable mean duration of the latent period.
Likewise, the proportion $\bdelta$ of individuals developing an infection is, in the SLIARS model, coupled with the contact parameter $\bc$, making up what is traditionally denoted $\bm{\beta}=\bdelta\circ\bc$ in classic models.

To compare results, we do need the reproduction number of the model considered here.
It takes the form \eqref{eq:R0-SLIARS}.
While the basic reproduction number $\R_0$ is equivalent for both models given an identical biological parameter set, their transient dynamics differ significantly due to their handling of initial exposure.
\begin{itemize}
    \item In the SEIARS model \eqref{sys:SEIARS}, contacts move individuals into an exposed state. 
    From here, only a fraction $\delta$ develop the infection and move to $I$ or $A$. 
    The remaining $1-\delta$ return to the susceptible pool after an average $1/\varepsilon$ time units.
    \item In the SLIARS model \eqref{sys:SLIARS}, only successful infections are modelled. 
    A contact immediately commits a fraction $\delta$ of the population to the latent state, where almost all individuals eventually become infectious (save for the ones not surviving natural death during the latent period), while the rest never leave the $S$ pool.
\end{itemize}

In the SEIARS framework, moving into the $\bE$ compartment temporarily removes individuals from the susceptible pool. 
This acts as a temporary shielding effect, preventing concurrent exposures. 
Biologically, this can be justified by rapid, localized innate immune activation (e.g., interferon responses in the respiratory mucosa) triggered by the initial sub-infectious contact, providing a brief window of non-specific viral resistance before the host returns to baseline susceptibility. 
Provided the exposure duration $1/\varepsilon$ remains short, this creates a plausible biological dampening effect absent in classical models.

To simplify comparisons, we consider here that parameters are equal across all $n$ Erlang chains $\bI$ and $\bA$ and corresponding $\bE$ compartments.
Furthermore, we operate in an epidemic context, by setting $b=d=\nu=0$, i.e., assuming there is no birth, death or loss of infection-acquired immunity.
The expression of $\R_0$ given by \eqref{eq:R_0} is the same in this context.
\subsubsection*{Shape of the outbreak (incidence)}
Considering the rate at which individuals become infectious (move into $I_1$ or $A_1$ compartments) in both models shows marked differences. 

\begin{figure}[htbp]
    \centering
    \begin{subfigure}{0.49\textwidth}
        \includegraphics[width=\linewidth]{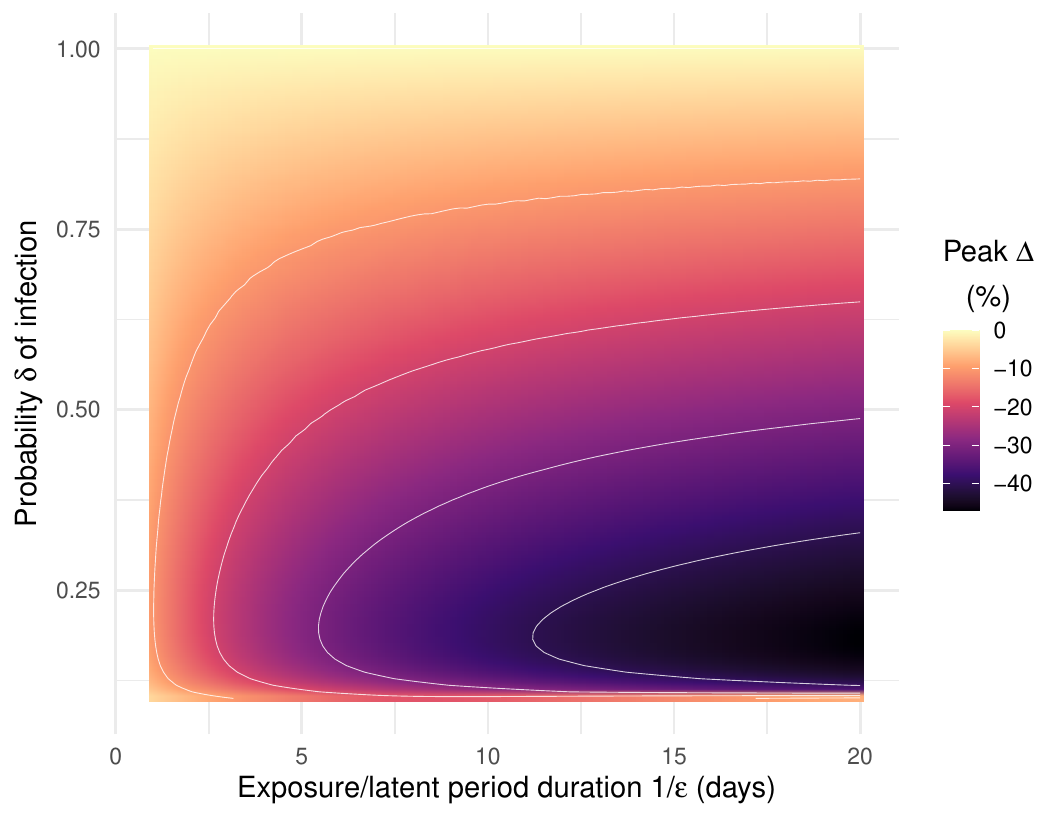}
        \caption{Difference in peak size}
        \label{fig:compare-SEIARS-SLAIRS-size}
    \end{subfigure}
    \begin{subfigure}{0.49\textwidth}
        \includegraphics[width=\linewidth]{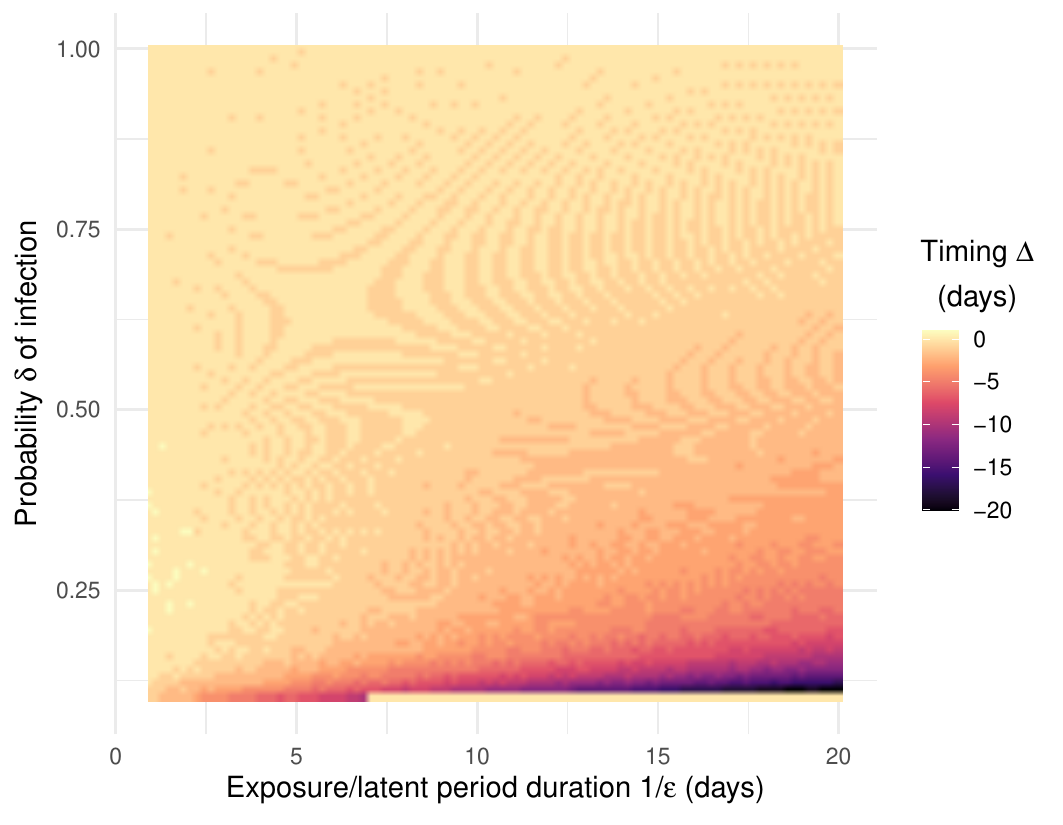}
        \caption{Difference in peak timing}
        \label{fig:compare-SEIARS-SLAIRS-timing}
    \end{subfigure} \\
    \begin{subfigure}[b]{0.48\textwidth}
        \includegraphics[width=\textwidth]{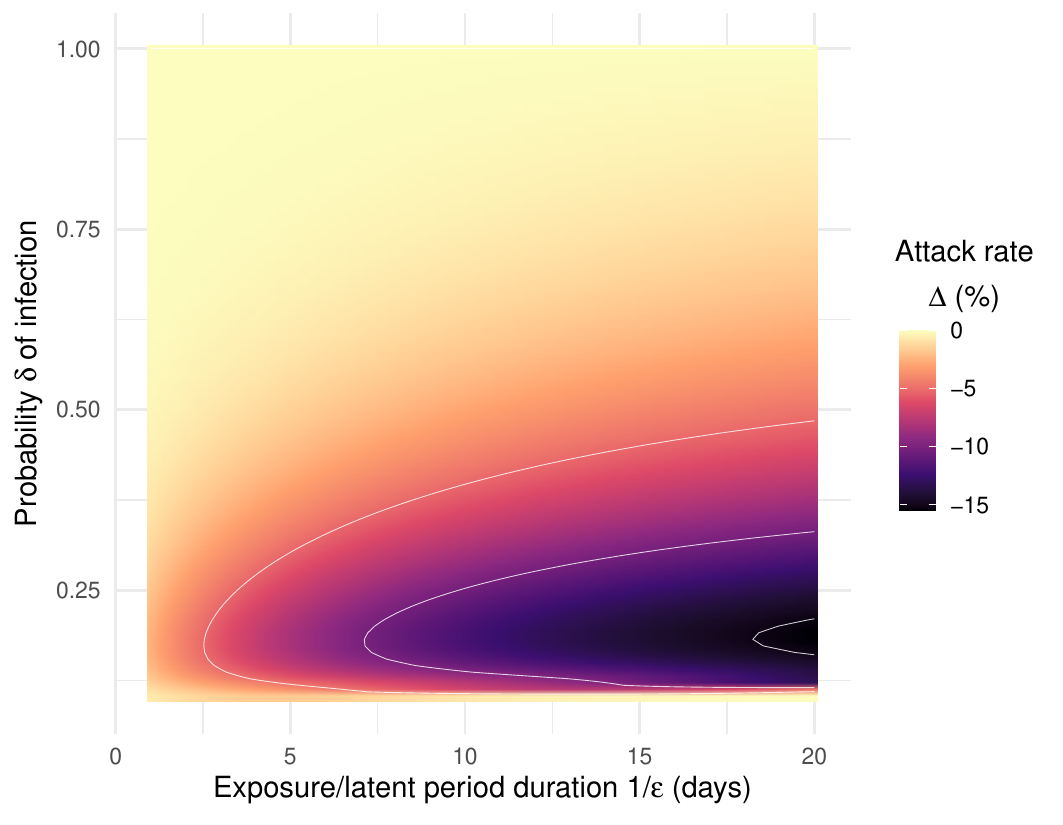}
        \caption{Attack rate}
        \label{fig:compare-SEIARS-SLAIRS-attack-rate}
    \end{subfigure}
    \caption{Comparison of the SEIARS and SLIARS models in terms of (a) peak incidence (\%), (b) peak timing (days) and (c) final size. All values given in percentages, with negative values indicating that the quantity is smaller in the SEIARS model than in the SLIARS model.}
    \label{fig:compare-SEIARS-SLAIRS}
\end{figure}

In Figure~\ref{fig:compare-SEIARS-SLAIRS}, we compare the magnitude of the peak size (Figure~\ref{fig:compare-SEIARS-SLAIRS-size}), peak timing (Figure~\ref{fig:compare-SEIARS-SLAIRS-timing}) and attack rate (Figure~\ref{fig:compare-SEIARS-SLAIRS-attack-rate}) between the SLIARS model taken as the baseline and the SEIARS model, when the mean duration of the exposure or latent period duration $1/\varepsilon$ and probability of infection $\delta$ are varied.
Note that here and elsewhere, the attack rate and various related ``total quantities'' are computed using the area under the curve of the relevant quantity over 1,500 days; the epidemic typically resolves much faster than that but we take a large time horizon so that even slow moving epidemics for extreme parameter values.
Negative differences are observed in both figures, meaning that the SEIARS model \eqref{sys:SEIARS} exhibits a lower peak incidence than the SLIARS model \eqref{sys:SLIARS}, that those peaks occur earlier and that the cumulative number of cases over the course of the epidemic is smaller.
Note that plotting the peak prevalence instead of the peak incidence shows a very similar picture (not shown).

This dynamic is driven by a ``buffer effect'': the existence of individuals in $\bE$ who eventually return to $S$ acts as a constant drain on the pool of potential new infections, leading to a smaller effective momentary outbreak.

\subsubsection*{Comparing apples and oranges}
We just established that the SLIARS model over-estimates the size of peaks compared to the SEIARS when using the same parameter values.
It is tempting to simply equate the parameter sets between the two models and continue the comparison this way. 
However, the differing internal structures of the two models means that the same nominal parameter values give rise to distinct macroscopic epidemic behaviours.
Thus, comparing them under identical parameter inputs can conflate structural differences with differences in the effective epidemic severity they simulate.

To address this, in the remainder of Section~\ref{sec:comparison-SEIRS}, we calibrate the models such that they align on a chosen observable epidemiological quantity.
While the basic reproduction number $\R_0$ is a common choice, in our case, as noted earlier, the expression for $\R_0$ in both the SEIARS and SLIARS models is identical.
Therefore, merely matching $\R_0$ does not sufficiently constrain the trajectories of the models.
Proper matching therefore requires using other measures that reflect the full nonlinear dynamics of the outbreak. 
Candidates include the peak infection prevalence, the final epidemic size, or the initial exponential growth rate. 
Here, we chose to match the \emph{initial exponential growth rate} ($r$) of the epidemic. 

Specifically, for a given set of parameters ($1/\varepsilon$ and $\delta$ in most examples below), we treat the SEIARS model as the ``true'' baseline and simulate its early exponential growth trajectory. 
We then numerically optimize the baseline contact rates ($c$) in the corresponding SLIARS model to minimize the difference in this initial growth rate. 
By doing so, we mimic the real-world public health practice of fitting baseline models to early outbreak data. 
Any remaining macroscopic discrepancies---such as differences in the peak prevalence or total final size---are therefore strictly attributable to the structural differences in how the prevalence-dependent exposure buffer naturally slows the SEIARS epidemic, allowing us to highlight the projection bias inherent in classical models.

\subsubsection*{The holding effect and isolation burden}
\begin{figure}[htbp]
    \centering
    \begin{subfigure}[b]{0.48\textwidth}
        \includegraphics[width=\textwidth]{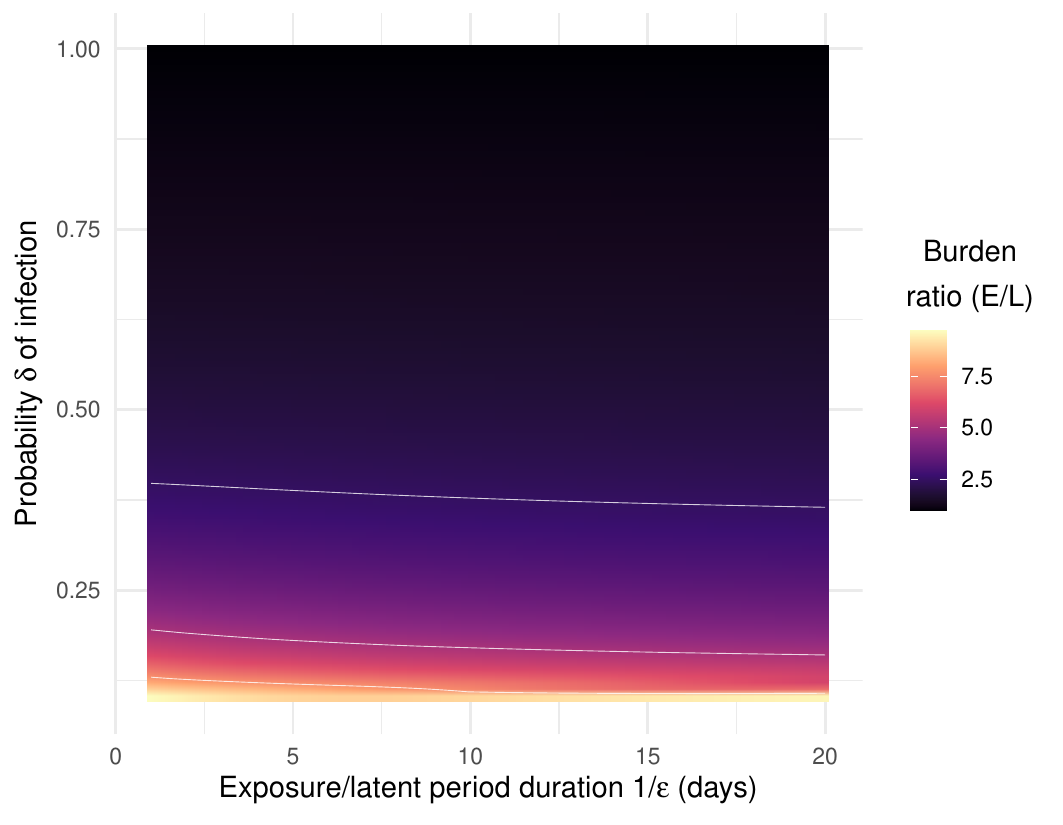}
        \caption{Burden ratio}
        \label{fig:compare-SEIARS-SLAIRS-isolation-burden-ratio}
    \end{subfigure}
    \caption{Comparison of the characteristics influencing isolation.}
    \label{fig:compare-SEIARS-SLAIRS-isolation}
\end{figure}

In the Introduction, one of the main criticisms we lobbed against classic ``exposure = latency'' models was related to a potential under-estimation of the cost of non-pharmaceutical interventions such as isolation or quarantine. While classical models can explicitly incorporate tracing by manually bolting on quarantined susceptible ($S_Q$) compartments, their base structures inherently erase the underlying biological population of ``near-misses.''

Using the same variation of parameters as in Figure~\ref{fig:compare-SEIARS-SLAIRS}, we consider in Figure~\ref{fig:compare-SEIARS-SLAIRS-isolation} two different quantities that illustrate this problem.
In Figure~\ref{fig:compare-SEIARS-SLAIRS-isolation-burden-ratio}, we compute the ratio of the area under the curves $\bE$ and $\bL$.
We see that when the probability $\delta$ of infection is low, the ratio of areas under the curves $\mathsf{AUC}(\bE)/\mathsf{AUC}(\bL)$ increases greatly: over the course of an outbreak, there are up to ten times more individuals ``holding'' in the $\bE$ compartments as there are in the $\bL$ compartments.
This acts as an idealized proxy for the true cost of isolation, since tracing and isolation policies apply to all exposed individuals, not just those who progress on to infected compartments. While classical epidemiological models can be extended with explicit quarantine compartments ($S_Q$) to simulate tracing, their base structure inherently erases the underlying biological population of ``near-misses''. By natively tracking the $\bE \to S$ loop, the SEIARS framework provides a built-in upper bound for the volume of false-positive individuals who would naturally be captured by a contact-tracing dragnet.

One could argue that the SEIARS model predicts a smaller incidence and final size, so the lower societal cost of infection could counterbalance the higher cost of isolation.
However, this is forgetting that while the values of $\R_0$ agree for the SEIARS and SLIARS models for equal parameters, the SEIARS model would probably show a larger value of $\R_0$ if parametrised using incidence data.
See for instance the next section on \emph{Mortality and projection bias}.

This difference in the number of individuals in the $\bE$ and $\bL$ compartments has a critical public health implication: standard latent models (SLIARS) drastically underestimate the social and economic cost of isolation and contact tracing because they only track individuals who will eventually get sick, ignoring the large volume of exposed individuals who will be quarantined but ultimately return to $S$. 
Correspondingly, the SEIARS $S$ pool drops rapidly as it is absorbed into $E$, then rebounds as people return to $S$ without infection, creating a unique population ``U-turn'' dynamic absent in classical models.

\subsubsection*{SLIARS models over-estimate mortality}
We now investigate the frequent overestimation of total mortality in early stages of a nascent epidemic or pandemic through the lens of the exposure model \eqref{sys:SEIARS}.
Matching the SEIARS and SLIARS models as explained earlier, we extract the cumulative disease-induced deaths from both simulations and plot the relative overestimation error produced by the SLIARS model. 
The results, visualised in Figure~\ref{fig:mortality-experiment}, show that even though the SLIARS model is fitted to the early growth data, it systematically overestimates total mortality. 
Under standard baseline assumptions, this overestimation error sits roughly at 17\%, but aggressively escalates when tracking a highly contagious disease profile where the true progression probability $\delta_{\text{true}}$ is relatively low.

\begin{figure}[htbp]
    \centering
    \begin{subfigure}[b]{0.48\textwidth}
        \includegraphics[width=\textwidth]{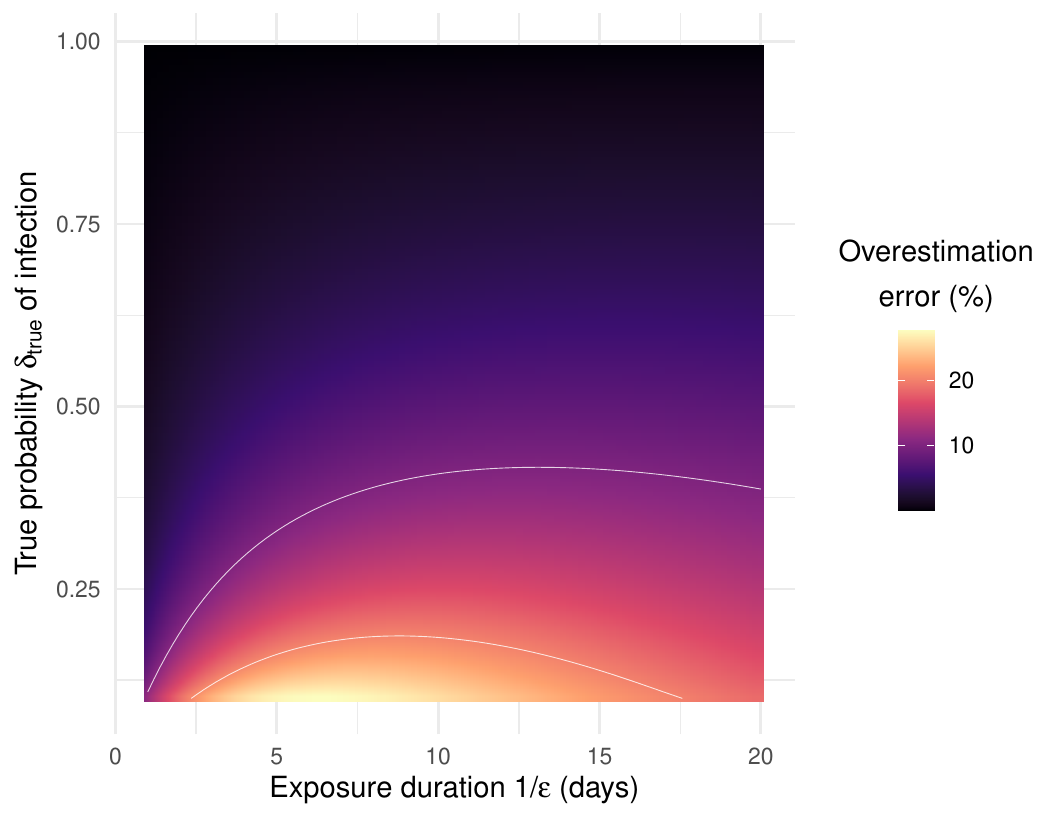}
    \end{subfigure}
    \hfill
    \begin{subfigure}[b]{0.48\textwidth}
        \includegraphics[width=\textwidth]{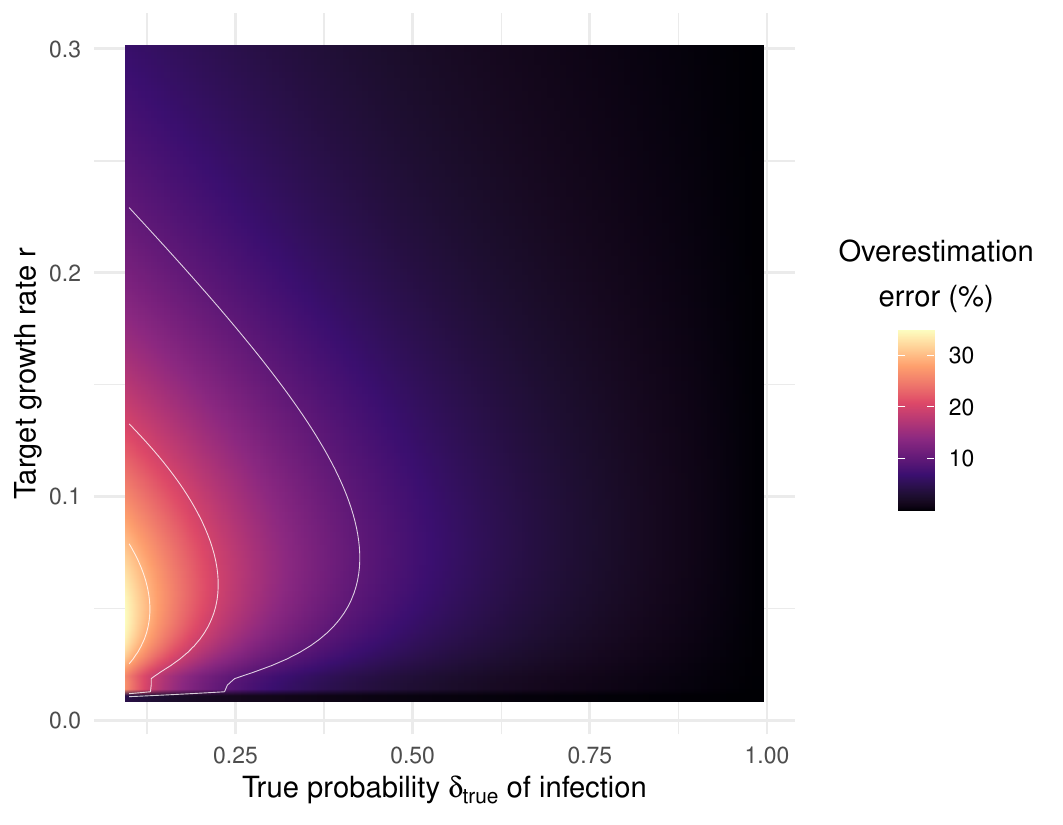}
    \end{subfigure}
    \caption{Mortality bias arising when a SLIARS model is matched to an SEIARS model using the initial exponential growth rate $r$.}
    \label{fig:mortality-experiment}
\end{figure}

This difference is a result of the exposed state acting as a temporary sink for susceptibles. 
This reduces the effective density of available susceptibles faster than actual confirmed infections would suggest, causing the outbreak to hit the herd immunity threshold sooner and naturally burn out with fewer total deaths.

\subsubsection*{Contact tracing efficiency}
The ``social cost'' of the classical exposed state is further quantified by measuring how many people are isolated unnecessarily. 
For a low $\delta$ threshold in the SEIARS model, a massive fraction of isolated individuals represent ``false positives'' who never actually develop the disease.
This results in millions of person-days spent in isolation by false positives, a socioeconomic effect that is structurally obscured in classic baseline SLIARS models. While those models can simulate tracing via explicit quarantine extensions, they typically apply tracing efforts to a network derived solely from truly infected individuals, missing the vast biological reservoir of natural false positives.

To illustrate the cost in the SEIARS model, Figure~\ref{fig:contact-tracing-efficiency} shows a sweep across the probability $\delta$ of infection. 
For each simulation, we calculat the total integrated time spent in the isolation compartment ($E$) by individuals who ultimately return to the susceptible state. 
As shown in Figure \ref{fig:contact-tracing-efficiency}, operating contact tracing at lower probabilities of infection triggers a severe, non-linear explosion in the societal isolation burden. 
Notably, for $\delta$ values below the epidemic threshold ($\R_0 < 1$, marked by the dotted red line), the outbreak fails to sustain itself, and the societal isolation burden remains negligible. However, immediately upon crossing the threshold, the exponential growth of the epidemic forces the false-positive isolation burden to spike dramatically, until the increasing true-positive rate and finite population size eventually constrain it.
\begin{figure}[htbp]
    \centering
    \includegraphics[width=0.6\textwidth]{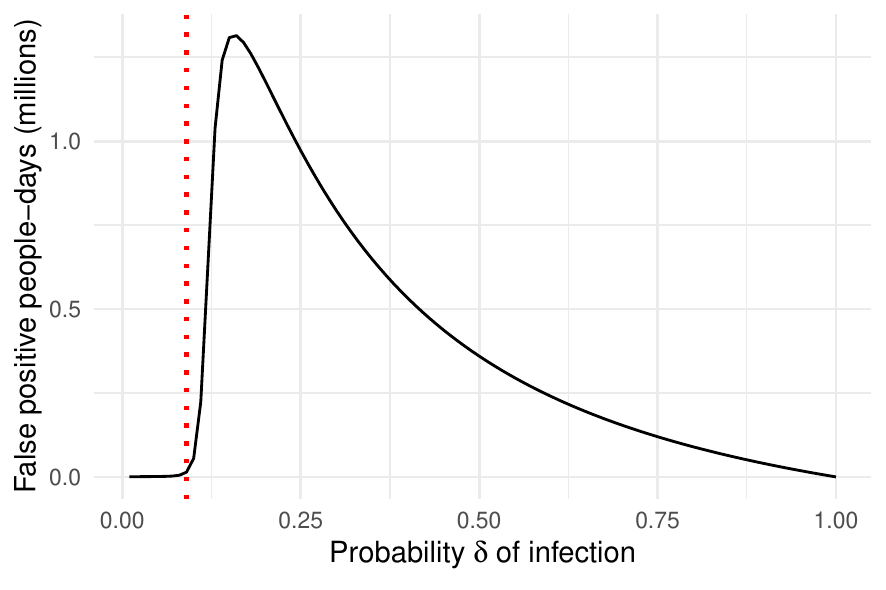}
    \caption{Societal cost of false-positive isolation as a function of infection probabilities $\delta$. 
    The curve represents the total millions of person-days spent in isolation by individuals who never actually become infectious. 
    The red line marks the epidemic threshold $\R_0=1$.}
    \label{fig:contact-tracing-efficiency}
\end{figure}

To quantify this mathematical drag structurally, we execute a high-resolution simulation sweep computing the operational efficiency of contact screening under the standard SLIAR assumption versus our full model. 
By indiscriminately applying contact tracing effort to false-positive contacts (defined by a true-positive tracking probability $\delta$), we match the exact same true-positive exposure flow into the compartment.
As shown in Figure \ref{fig:contact-tracing-comparison}, modeling contact tracing using standard isolation compartments structurally overestimates the efficiency per unit effort (isolation-days enforced). 
This discrepancy becomes aggressively more pronounced at lower testing specificity/true-positive rates ($\delta$), while simultaneously exhibiting a shielding non-linearity: isolating massive networks of false positives paradoxically buffers the true severity under high baseline transmission settings ($c_{scale}$), while causing efficiency outcomes to plummet further relative to classic expectations.

\begin{figure}[htbp]
    \centering
    \begin{subfigure}[b]{0.48\textwidth}
        \centering
        \includegraphics[width=\textwidth]{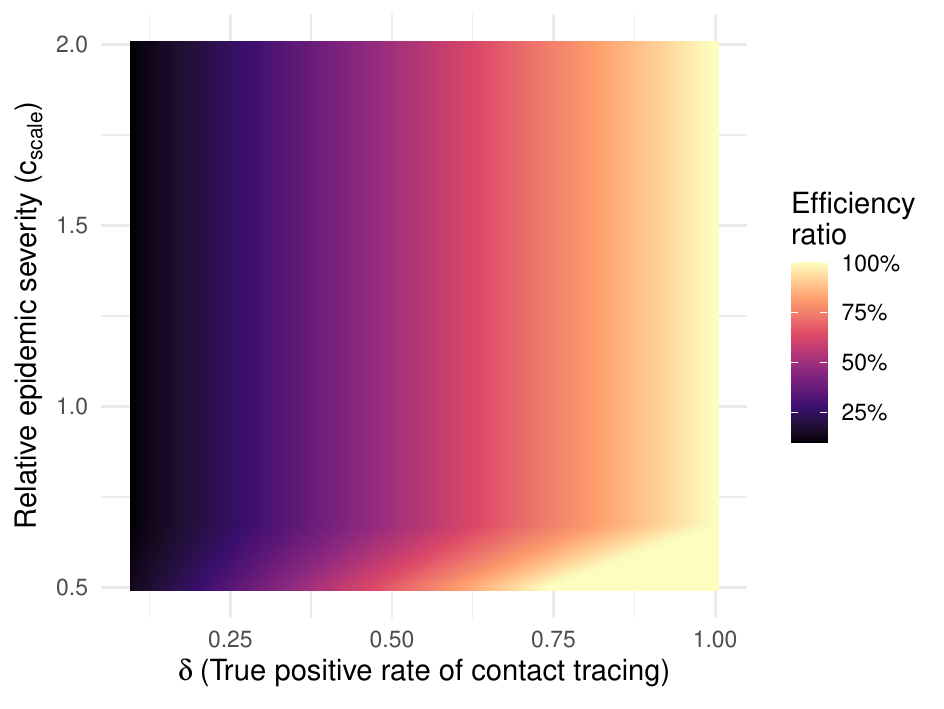}
        \caption{Efficiency}
        \label{fig:contact-tracing-comparison-ratio}
    \end{subfigure}
    \hfill
    \begin{subfigure}[b]{0.48\textwidth}
        \centering
        \includegraphics[width=\textwidth]{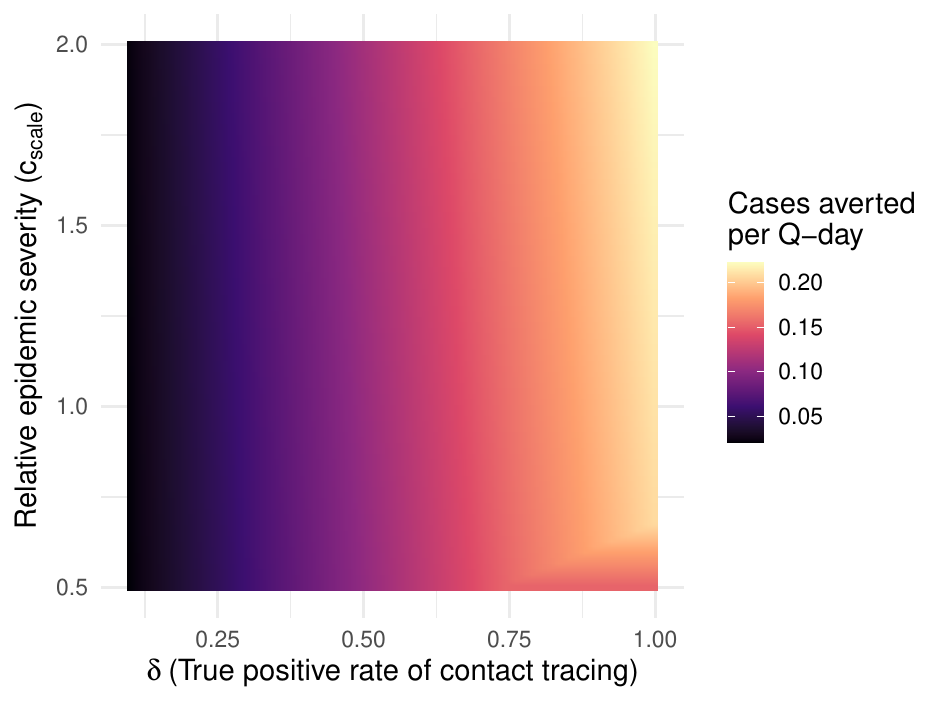}
        \caption{Averted cases}
        \label{fig:contact-tracing-comparison-absolute}
    \end{subfigure}
    \caption{Impact of the true positive tracing rate ($\delta$) against relative epidemic severity ($c_{scale}$). Classical computations are rigidly chained to outcomes representing the extreme right boundary ($\delta=1.0$), thereby severely discounting the ballooning network effort of isolating non-infectious individuals. 
    (a) Ratio of True model efficiency versus the expected efficiency modelled under standard SLIAR assumptions.
    (b) Absolute cases averted structurally per quarantine-day across the full mathematical model.}
    \label{fig:contact-tracing-comparison}
\end{figure}



\subsection{Role of the number of compartments in Erlang chains}
\label{sec:Erlang-on-or-off}

\begin{table}[htpb]
    \centering
    \begin{tabular}{lcc}
        \toprule
        Criteria & Change $n=1\to n=20$ & Change $n=2\to n=20$ \\
        \midrule
        $\R_0$   & 2.5\% & 1.2\% \\
        Peak prevalence   & 227.2\% & 102.9\% \\
        Peak time         & -62.2\% & -43.4\% \\
        Final deaths      & 8.4\% & 4.8\% \\
        \bottomrule
    \end{tabular}
    \caption{Percentage change in key epidemiological quantities as the number of stages $n$ changes.
    Here, $c^I = 0.5$ and $c^A = 0.25$.}
    \label{tab:erlang_n_variation}
\end{table}

We use a generalised Erlang distribution to model the total time spent in infectious compartments.
We briefly investigate here the effect of changing the number of compartments in these Erlang chains.
In Table~\ref{tab:erlang_n_variation}, we see the effect of changing $n$ from 1 to 20 on various quantities, highlighting the influence of that parameter.
We also show the changes when the strictly exponential case is excluded, i.e., $n$ changes from 2 to 20.
Reducing the respective variance of the sojourn time (by increasing $n$) synchronizes individual transitions across the population, leading to a more concentrated and intense outbreak peak compared to an exponential ($n=1$) assumption.

Notice that the reproduction number $\R_0$ increases as a function of $n$. 
It is easy to find parameter ranges for which $\R<1$ for some $n< n_c$ and $\R_0\geq 1$ for $n\geq n_c$.
For instance, lowering $c^I$ to 0.25 with other parameters as used to create Table~\ref{tab:erlang_n_variation}, we find $n_c=4$.
Now, $n$ is typically fixed by knowledge about time of sojourn in the infected compartments, so $n$ is not \emph{per se} a bifurcation parameter, but this is a fact useful to keep in mind.


\subsection{Generalised Erlang durations and viral load mapping}
\label{sec:viral-load-mapping}
When formulating the model in Section~\ref{sec:math-model}, we remarked that the $\bI$ and $\bA$ chains do not have strict Erlang distributions of sojourn times.
We explore this in more detail here.

First, regarding sojourn times not being Erlang, note the following.
Let $\mu_{\min}=\min_i\mu_i$ and $\mu_{\max}=\max_i\mu_i$.
Then the time spent in the chain of $\bI$ compartments is bounded below and above by Erlang distributions with shape parameters $n$ and scale parameters $\gamma_I+\mu_{\min}+d$ and $\gamma_I+\mu_{\max}+d$, respectively.
This is due to closure under convolutions of the usual stochastic order \cite[Theorem 1.A.3]{shaked2007stochastic}.
The resulting distribution of sojourn times in $\bI$ is called \emph{hypoexponential} \cite{trivedi2016probability} or, in applied contexts, \emph{generalised Erlang} \cite{kluppelberg2020explicit}.
For details in the mathematical epidemiology context, see \cite{hurtado2021building}.
Similarly, the time spent in the $\bA$ chain is bounded above and below by Erlang distributions with shape parameters $n$ and scale parameters $\gamma_A+\tau_{\min}+d$ and $\gamma_A+\tau_{\max}+d$, respectively, where $\tau_{\min}=\min_i\tau_i$ and $\tau_{\max}=\max_i\tau_i$.

\begin{figure}[htbp]
    \centering
    \begin{subfigure}[b]{0.48\textwidth}
        \centering
        \includegraphics[width=\textwidth]{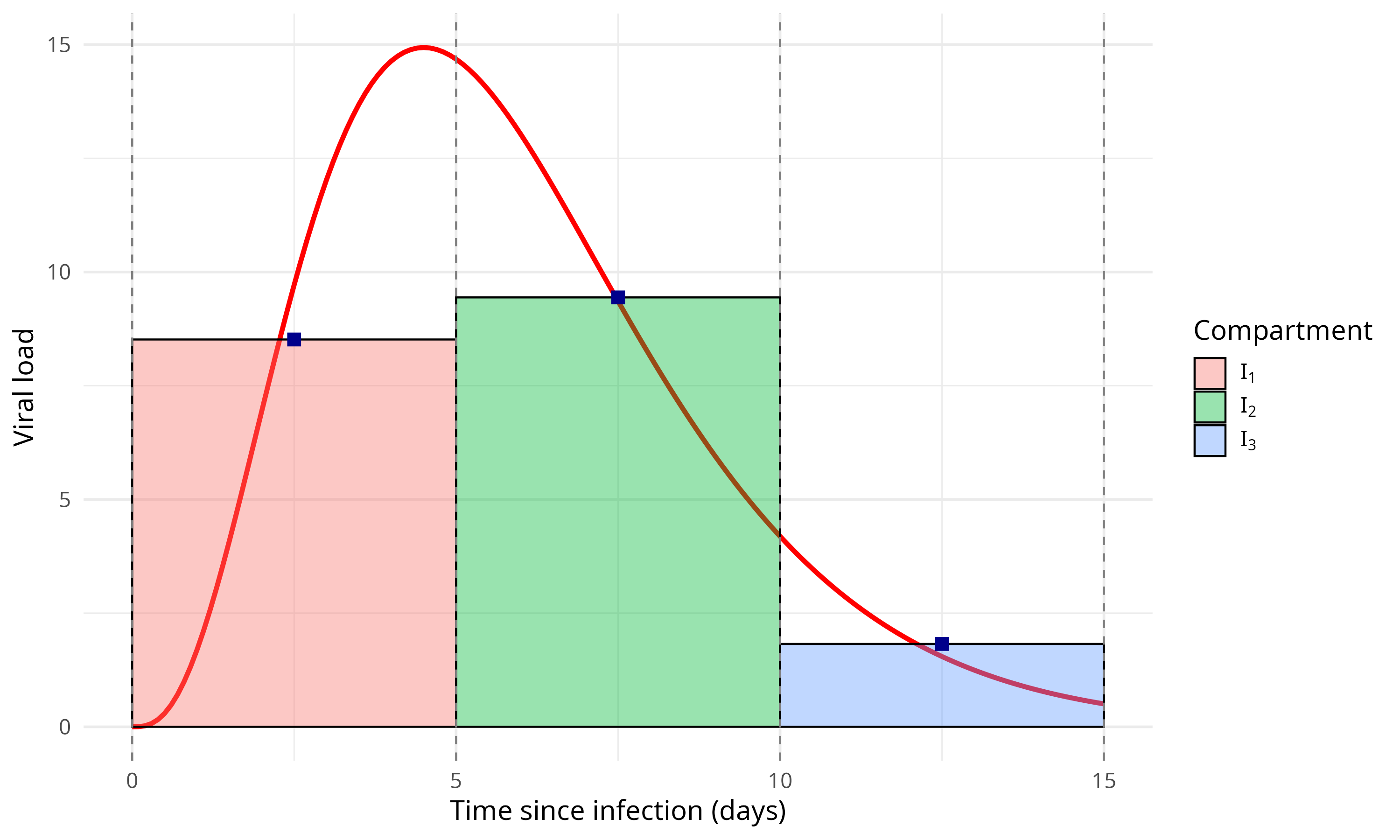}
        \caption{Constant $\gamma$ (Erlang)}
        \label{fig:viral-erlang}
    \end{subfigure}
    \hfill
    \begin{subfigure}[b]{0.48\textwidth}
        \centering
        \includegraphics[width=\textwidth]{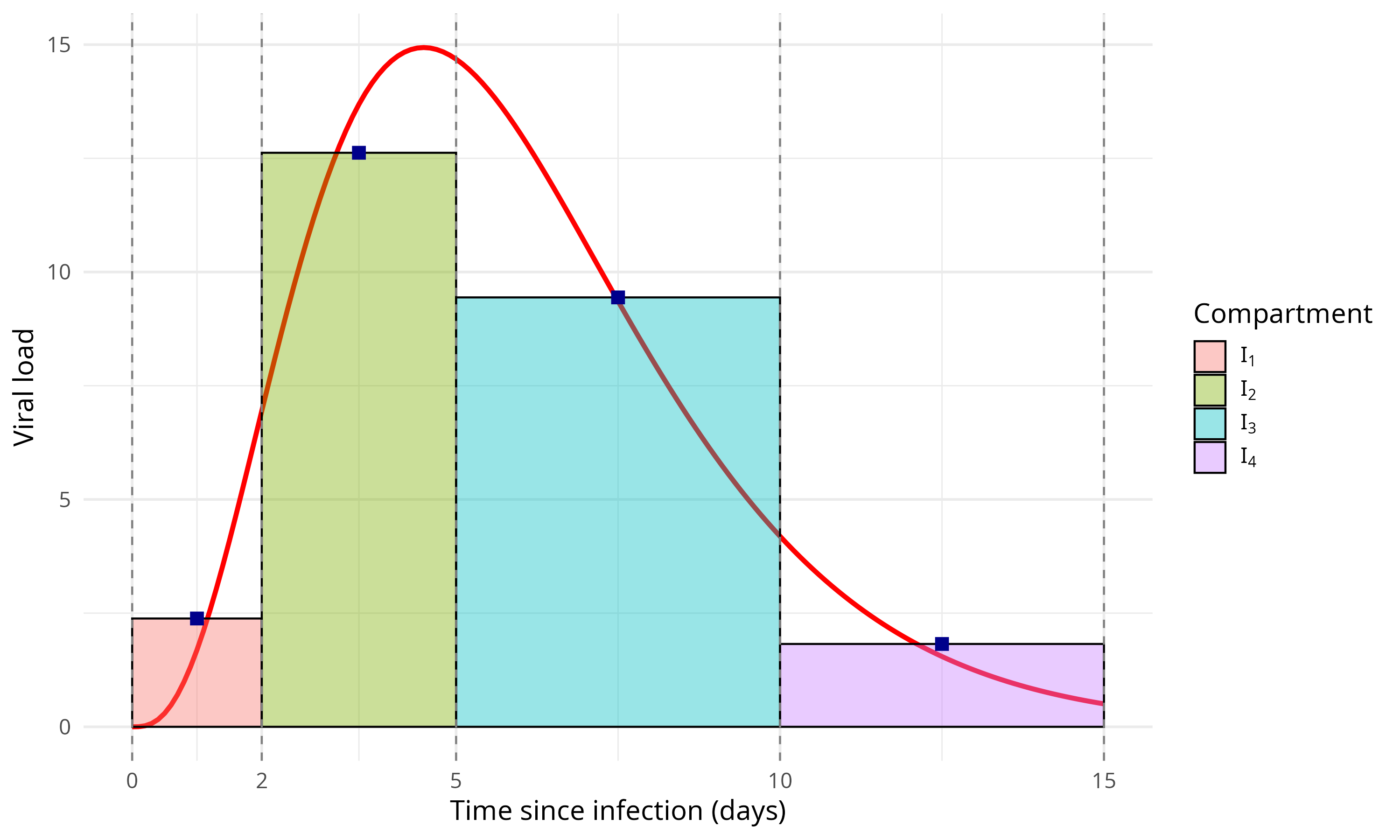}
        \caption{Varying $\gamma_k$ (generalised Erlang)}
        \label{fig:viral-varying}
    \end{subfigure}
    \caption{Mapping a continuous viral load function (red curve) to discrete compartmental parameters. 
    In (a), the Erlang case enforces equal residency times for all stages. 
    In (b), a generalised Erlang has varying compartment lengths.}
    \label{fig:viral-load-subfigures}
\end{figure}

By the same argument, we could also individualise the $\gamma_I$ and $\gamma_A$ and retain a generalised Erlang structure.
However, due to the difficulty in obtaining reliable data on sojourn times in infected compartments, we keep all $\gamma_I$ and $\gamma_A$ equal throughout the $\bI$ and $\bA$ chains, respectively.
This makes fitting to potentially observed infectious stage durations easier; see, e.g., the Ebola Virus Disease example in \cite[Section 3.1.2]{Arino2020a}.
Note that the added complexity from individualising $\gamma_I$ and $\gamma_A$ is minor; the mathematical analysis carries through without any issue and the problem is mostly notational.

Let us first consider the idealised case of equal $\gamma_X$.
Figure~\ref{fig:viral-load-subfigures} illustrates the translation of a continuous viral load curve into the corresponding discrete infection parameter $\delta_k^I$ assigned to each sub-compartment $I_k$. 
Because the residency time in each compartment is exponentially distributed, the expected entry time of an individual into compartment $I_k$ is the sum of the mean durations of all preceding $k-1$ compartments. 
As depicted in Figure~\ref{fig:viral-erlang}, \emph{ceteris paribus}, enforcing an identical transition parameter $\gamma_I$ across all compartments guarantees an equal average residency time uniformly distributed along the period of infection (the ``pure'' Erlang case).

On the other hand, relaxing the uniformity assumption to support varying phase durations $\gamma_{Xk}$, as illustrated in Figure~\ref{fig:viral-varying}, gives a generalised Erlang residency distribution, which can more flexibly track a target viral load. (In Figure~\ref{fig:viral-erlang}, one would have to substantially increase the number of compartments to better track the viral load.)
Another situation where varying $\gamma_{Xk}$ might be justified is when some durations are known.
For instance, in \cite{ArinoBoelleMillikenPortet2021}, the latent chain $\bm{L}$ is divided into a compartment with average sojourn time of 3 days and one with average sojourn time of 1 day, since in the case of COVID-19, latency was on average of 4 days with the last day being potentially infectious, which the model in \cite{ArinoBoelleMillikenPortet2021} considers.

To finish on sojourn times, note that one has to keep in mind that even in the Erlang case, residence times in the chain are only average, so two individuals surviving through the chains may experience very different total residence time, making the notion of chronological time somewhat arbitrary.
Synchronisation between $\bI$ and $\bA$ compartments is thus by position in the chains, not by actual chronological time.
If more precision is critically needed or if much finer grained data is available, then partial differential equations models including age-of-infection information should be used.

\subsection{Effect of infection parameter skew}
\label{sec:parameter-skew}
We now turn to the role of the infectivity $\delta$ in relation to these generalised Erlang chains, i.e., we dig more into the role of the red curve in Figure~\ref{fig:viral-load-subfigures}, investigating how the specific distribution of infection probability $\delta$ across the stages of the exposed compartments affects the epidemic, given a constant mean $\delta$.
Figure~\ref{fig:skew-illustration} shows this skewing mechanism in the case where there are $n=3$ compartments in the chains: a constant mean infectivity $\delta$ is differently distributed across the disease stages depending on the chosen peak location $\sigma\in[1,3]$, where $\sigma=1$ means the peak occurs at the start of the stage sequence (i.e., is exponentially distributed) while $\sigma=3$ means it occurs at the very end.
To simplify the situation, we only consider $n=3$ here.

\begin{figure}[htbp]
    \centering
    \includegraphics[width=\textwidth]{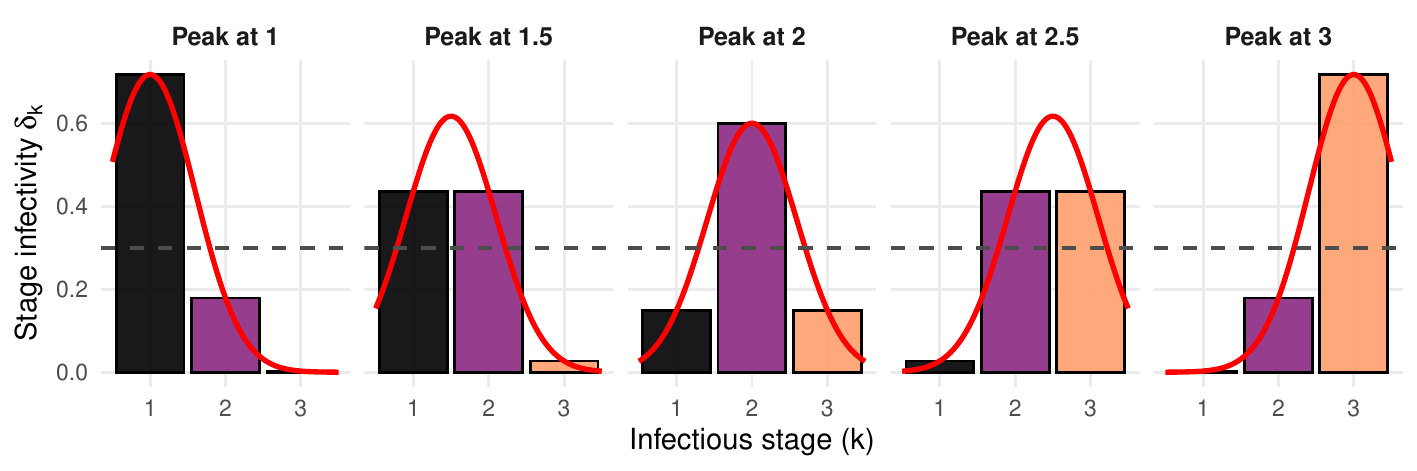}
    \caption{Illustration of the skew function, showing how infectivity risk is distributed across 3 stages for various peak locations ($\sigma$) while maintaining a constant mean $\delta = 0.3$.}
    \label{fig:skew-illustration}
\end{figure}

To explore different temporal profiles of infectivity while keeping the overall total transmission potential approximately constant, we distribute a baseline total infectivity across $n$ infectious stages using a Gaussian weighting function. Let $\delta_{\text{mean}}$ be the baseline mean infectivity per stage, corresponding to a total baseline infectivity of $n \delta_{\text{mean}}$. For a given peak location parameter $\sigma$ (which controls which stage is the most infectious) and a fixed spread $s$, the unnormalized weight for stage $k \in \{1, \dots, n\}$ is defined as
\[
w_k = \exp\left(-\frac{(k - \sigma)^2}{2s^2}\right).
\]
We normalize these weights so they sum to 1 and then scale them by the total baseline infectivity. 
Thus, the target infectivity for stage $k$ is given by
\[
\delta_k = \frac{w_k}{\sum_{j=1}^n w_j} \; n \delta_{\text{mean}}.
\]
Finally, to respect the constraint that a stage's infection probability cannot exceed 100\%, we cap the resulting $\delta_k$ at $1.0$.

The plots in Figure~\ref{fig:heatmap-skew} show the effect of varying $\delta$ and $\sigma$.
As indicated by the incidence heatmap in Figure~\ref{fig:heatmap-skew}, a late progression benefit emerges: for a fixed average $\delta$, shifting the infection risk toward the later stages of the exposure period reduces the peak prevalence and the final population attack rate.

\begin{figure}[htbp]
    \centering
    \begin{subfigure}[b]{0.32\textwidth}
        \includegraphics[width=\textwidth]{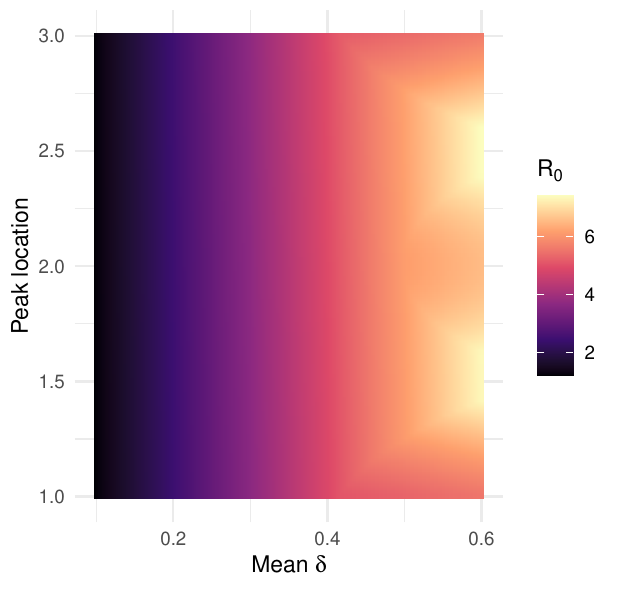}
        \caption{$\R_0$}
        \label{fig:heatmap-skew-R_0}
    \end{subfigure}
    \hfill
    \begin{subfigure}[b]{0.32\textwidth}
        \includegraphics[width=\textwidth]{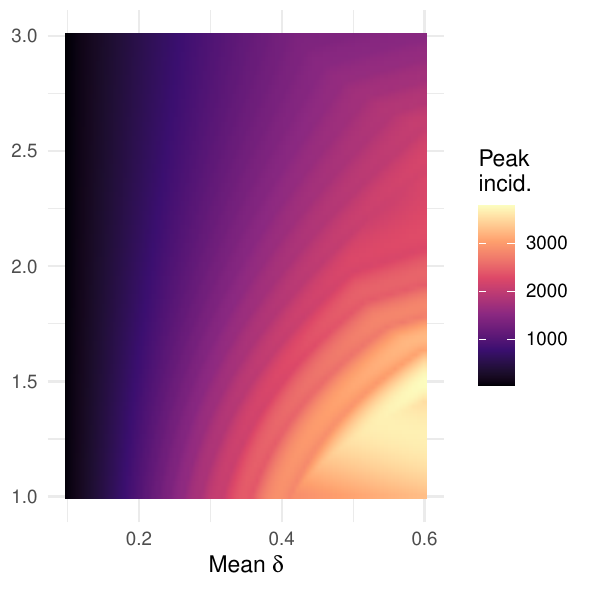}
        \caption{Peak incidence}
        \label{fig:heatmap-skew-incidence}
    \end{subfigure}
    \hfill
    \begin{subfigure}[b]{0.32\textwidth}
        \includegraphics[width=\textwidth]{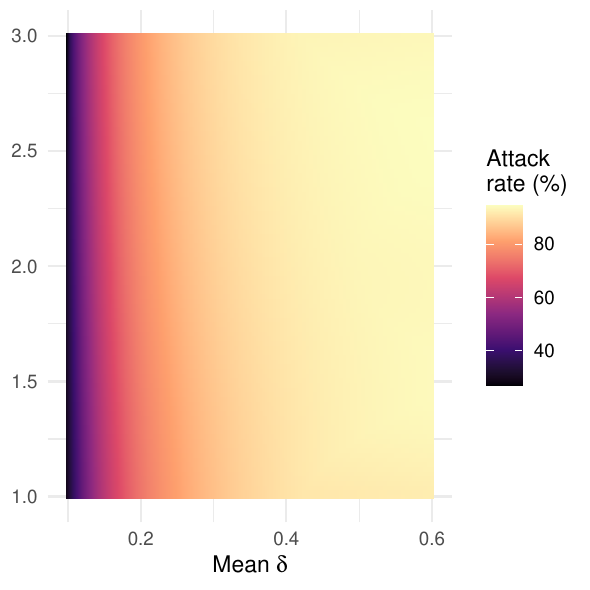}
        \caption{Attack rate}
        \label{fig:heatmap-skew-attack-rate}
    \end{subfigure}
    \caption{Effect of varying the distribution of a fixed $\delta$ across 3 stages.}
    \label{fig:heatmap-skew}
\end{figure}

This indicates that the shape and orientation of the internal parameter distribution can operate as a deciding factor in whether a localized, marginal disease outbreak formally crosses the epidemic threshold.

\section{Discussion}
\label{sec:discussion}
In this work, we lay the basis for taking into consideration exposure to a pathogen not automatically followed by development of the infection stemming from that pathogen, not with a simple fraction of whatever contact function as is used in classic epidemiological models.

We could have considered a simpler model with $n=1$ compartments in the Erlang chains, i.e., a single of each the $E$, $I$ and $A$ compartments.
However, ``qui peut le plus peut le moins'' (who can do more, can do less) and the analysis and ideas here naturally carry forward to the $n=1$ case.
Also, retaining the full dimensionality of the model allows to use differential progression to infection based on some proxy for age-of-infection of the potentially infecting contact.
Modelling infectivity of an individual dependent on how long they have carried the pathogen for is not new: the seminal paper of Kermack and McKendrick \cite{kermackmckendrick1927} is best remembered for the three dimensional SIR model that bears their names. 
That model is, however, a simplification of the age-of-infection structured model they introduce.
Later, \cite{lajmanovich1976deterministic} formalised group models, i.e., models where the infected population is heterogenous and in which each group has a different force of infection acting on the susceptible population.
Moving in the direction of the temporal structuration of groups, \cite{jacquez1988modeling} considered a complex stage-dependent model for HIV. 
A simplified version of that model became the well-known \cite{Hymanetal1999}.
Similarly, co-infection models bear some similarity with our model, in that individuals enter the infected compartment related to the strain they acquire.
All these models, though, retain the idea that a fraction of the contacts with whatever type of infectious individual is involved moves into the corresponding infected compartment.

Much of the computational analysis we carry out concerns the comparison with an equivalent ``classic'' model where a fraction of the contacts progress into a latent stage.
Utilising classic SLIARS mechanics when the true human dynamics follow SEIARS patterns consistently leads to
\begin{enumerate}
    \item Overestimating the peak case burden on the healthcare system.
    \item Vastly underestimating the socio-economic burden of quarantine and contact tracing.
    \item Systematically overestimating the final size and total mortality of the overall outbreak.
\end{enumerate}
This has important implications in terms of public health policy formulation. 

The model we introduce has limitations, of course.
Firstly, we assume that the first contact that a susceptible individual makes determines the characteristics of their exposed period, both in terms of likelihood of clearing the infection and outcome of the infection  if they do not clear it (symptomatic or asymptomatic).
In ``real life'', susceptible individuals make multiple contacts with infectious individuals and it is likely that repeated contacts during the exposed period increase the probability of developing an infection.
Secondly, the number of parameters involved is potentially huge and many of them are difficult, if not impossible, to estimate.
The model is clearly not identifiable, neither structurally nor in practice.
Thirdly, lacking here is vaccination, which, when available, plays a very important role in determining progression to infection.
The latter is the object of ongoing work for Part 2, which investigates what happens if vaccination is added to the model.

\subsection*{Acknowledgments}
JA is supported in part by an NSERC Discovery Grant.

\appendix

\counterwithin{equation}{section}
\counterwithin{figure}{section}
\counterwithin{table}{section}
\setcounter{equation}{0}
\setcounter{theorem}{0}
\renewcommand{\theequation}{\thesection.\arabic{equation}}
\renewcommand{\thetheorem}{\thesection.\arabic{theorem}}
\setcounter{section}{0}

\section*{Appendices}
\section{Proofs in the mathematical analysis}
\label{app:proofs}

\subsection{The model is well-posed (Theorem~\ref{th:positivity})}
\label{app:proofs-well-posedness}
The proof of Theorem~\ref{th:positivity} proceeds in three steps.

\emph{Step 1: Positivity --}
Let $\bX$ be the solution of \eqref{sys:SEIARS} corresponding to the initial conditions $\bX(0)\gg \bzero$.
Then, there exists some $t_0>0$ such that $\bX(t)$ remain non-negative for all $t\in \, (0,t_0)$.

Assume that $t_0<+\infty$.
Then there exists $t_1>t_0$ such that at least one component of $\bX$ satisfies $\{X_j(t_1)=0,\dot{X}_j(t_1)<0\}$, for an existing $j\in \{1,\dots,4n+2\}$.
It follows that the set
\begin{align*}
\mathcal{J}=&\{ t_1>t_0|\,\exists j\in\{1,\dots,2n+3\},\,\{X_j(t_1)=0,\dot{X}_j(t_1) <0\}\}
\end{align*}
is non-empty.
Let
\[
t^\star=\min\limits_{t_1}\left\{\mathcal{J}\right\}.
\]
Without loss of generality, assume that $S(t^\star)=0$ and $\dot{S}(t^\star)<0$, i.e., $j=1$.
With $S(t^\star)=0$, \eqref{sys:SEIARS-dS} gives
\[
\dot S(t^\star)=b+(1-\delta(\bY_{\text{exp}}(t^\star)))\varepsilon E(t^\star)
+\nu R(t^\star) >0,
\]
which implies that $S(t^\star)>0$.
This is absurd since $\dot{S}(t^\star)<0$.

The same approach is used for the other state variables to show that it is impossible to have
$\{X_j(t_1)=0,\dot{X}_j(t_1)<0\}$, for an existing $j\in \{1,\dots,4n+2\}$.
Then $t_1=+\infty$. Therefore, the trajectories of the solutions of system \eqref{sys:SEIARS} remain positive for all $t\in [0,+\infty[$.

\emph{Step 2: Boundedness --}
Adding all equations, the dynamics of total human population satisfies
\begin{equation}\label{d_to}
\dot{N}=b-d N-\left\langle\mu,I\right\rangle\leq b-d N(t).
\end{equation}
Applying the Gronwall inequality to \eqref{d_to} with the initial condition $N(0)$ gives
\[
N(t)\leq \dfrac{b}{d}+\left(N(0)-\dfrac{b}{d}\right)e^{-d t}=\dfrac{b}{\mu}\left(1-e^{-d t}\right)+N(0)e^{-d t},\quad \forall t \geq 0.
\]
This implies that $N(t)\leq N(0)+{b}/{d}$ for all time $t\geq 0$.
Since state variables are non-negative (Step 1), the boundedness of the total population \eqref{eq:total} implies that all state variables are also bounded.
Therefore, combining Steps 1 and 2, the result about the positivity and boundedness of solutions in \eqref{om} is proved.

\emph{Step 3 --} Here, one shows how the system \eqref{sys:SEIARS} admits a unique maximal solution. For this, we prove that the right part is less than a bilinear function. We will show the force of infection $\lambda$ can not be infinity.

Firstly, there exists a constant $\eta_0>0$ such that the dynamic of the total population verifies $N(t)\geq \eta_0, \forall t>0$.
Suppose that $\forall \eta_0>0$, there exists $t_0>0$ such that $N(t_0)< \eta_0$.
Using the dynamic of $N$ in \eqref{d_to} and the fact that $I_i(t),A_i(t)\leq N(t)\; (i=1,\dots,n), \,\forall t\geq 0$, it follows that
\begin{equation}\label{into}
\dot{N}(t)\geq b-\left(d+\sum\limits_{1=1}^{n}\mu_i\right)N(t).
\end{equation}
Applying the Gronwall inequality to \eqref{into} with the initial condition $N(0)$ gives that
\[
N(t)\geq \dfrac{b}{d+\sum\limits_{1=1}^{n}\mu_i}+\left(N(0)-\dfrac{b}{d+\sum\limits_{1=1}^{n}\mu_i}\right)e^{-\left(d+\sum\limits_{1=1}^{n}\mu_i\right)t},\quad \forall t \geq 0.
\]
At $t=t_0$, one has $N(t_0)\geq \eta_0$, where
\[
    \eta_0=\dfrac{b}{d+\sum\limits_{1=1}^{n}\mu_i}\left(1-e^{-\left(d+\sum\limits_{1=1}^{n}\mu_i\right)t_0}\right)+N(0)e^{-\left(d+\sum\limits_{1=1}^{n}\mu_i\right)t_0},
\]
a contradiction since we assumed that $N(t_0)< \eta_0$ for all $\eta_0> 0$.
It follows that $\exists \eta_0>0,\,N(t)\geq \eta_0, \forall t>0$.

This implies that the force of exposure \eqref{eq:force-of-exposure} is bounded as
\[
\lambda\leq \dfrac{\left\langle\bY,\left(c_1^I,\ldots,c_n^I,c_1^A,\ldots,c_n^A\right)\right\rangle}{\eta_0}.
\]
Then, the right part in system \eqref{sys:SEIARS} is less than a bilinear function (which is Lipschitz naturally) and consequently in $C^\infty(\Omega)$. This completes the proof of Theorem \ref{th:positivity}.

\subsection{$\R_0$ and stability of the disease-free equilibrium}
\label{app:proofs-R_0-LAS}
To determine the basic reproduction number $\R_0$ and consider the local asymptotic stability of the disease-free equilibrium $\bX^0$ of \eqref{sys:SEIARS}, we apply the next-generation matrix method of \cite{vdDWatmough2002}.

Define the infected state vector as $(\bE, \bY)$, explicitly including the exposed compartments $\bE$ among infected states.
The vector $\mathcal{F}$ of new infection rates accounts for the incidence of new cases entering the exposed compartments $\bE$ via interactions between susceptible and infectious individuals. The vector $\mathcal{V}$ of all other transitions captures the progression out of $\bE$ (either back to $S$ or onward to $I_1$ and $A_1$) alongside the transitions through the sequential infectious stages. Evaluated at $(\bE,\bY)$, they are given in block vector form by
\begin{equation*}
\mathcal{F} = \begin{pmatrix}
\frac{S}{N} \diag(\bc) \bY \\
\bzero_{2n}
\end{pmatrix}
\quad
\text{and}
\quad
\mathcal{V} = \begin{pmatrix}
(\diag(\bvarepsilon) + d\mathbb{I}_{2n})\bE \\
-\langle \bp_I, \bE \rangle + (\gamma_I + \mu_1 + d)I_1 - \tau_1 A_1 \\
-\gamma_I I_1 + (\gamma_I + \mu_2 + d)I_2 - \tau_2 A_2 \\
\vdots \\
-\gamma_I I_{n-1} + (\gamma_I + \mu_n + d)I_n - \tau_n A_n \\
-\langle \bp_A, \bE \rangle + (\gamma_A + \tau_1 + d)A_1 \\
-\gamma_A A_1 + (\gamma_A + \tau_2 + d)A_2 \\
\vdots \\
-\gamma_A A_{n-1} + (\gamma_A + \tau_n + d)A_n 
\end{pmatrix},
\end{equation*}
where $\diag(\bc)$ is the matrix of contact rates, $\diag(\bvarepsilon) = \diag(\varepsilon_1^I, \ldots, \varepsilon_n^I, \varepsilon_1^A, \ldots, \varepsilon_n^A)$ is the matrix of rates of leaving the exposed compartments and the progression weighting vectors to symptomatic and asymptomatic compartments are defined as $\bp_I = \bm{\pi} \circ \bm{\delta} \circ \bvarepsilon$ and $\bp_A = (\bone_{2n}-\bm{\pi}) \circ \bm{\delta} \circ \bvarepsilon$, respectively.

The Jacobian matrices $F$ and $V$ with respect to $(\bE, \bY)$, evaluated at the DFE, are given by the block matrices
\begin{equation*}
F = \begin{pmatrix}
\bzero_{2n\times 2n} & \diag(\bc) \\
\bzero_{2n\times 2n} & \bzero_{2n\times 2n}
\end{pmatrix}
\quad
\text{and}
\quad 
V = \begin{pmatrix}
\diag(\bvarepsilon) + d\mathbb{I}_{2n} & \bzero_{2n\times 2n} \\
- P & M
\end{pmatrix},
\end{equation*}
where the $2n\times 2n$ matrix $P$ captures the progression from $\bE$ into $I_1$ and $A_1$,
\[
P = \begin{pmatrix} 
\bp_I^T \\ 
\bzero_{n-1\times 2n} \\ 
\bp_A^T \\ 
\bzero_{n-1\times 2n} 
\end{pmatrix},
\]
\[
M = \begin{pmatrix} W_I & -\diag(\btau) \\ \bzero_{n\times n} & W_A \end{pmatrix},
\]
where $\btau = (\tau_1, \ldots, \tau_n)$ and for $X\in\{I,A\}$, $W_X$ takes the form
\[
W_X= \begin{pmatrix}
\alpha_1^X & 0 & \dots & \dots & 0 \\
-\gamma_X & \ddots & \ddots & & \vdots \\
0 & \ddots & \ddots & \ddots & \vdots \\
\vdots & \ddots & \ddots & \ddots & 0 \\
0 & \ldots & 0 & -\gamma_X & \alpha_n^X
\end{pmatrix},
\]
with $\alpha_k^I=\gamma_I+\mu_k+d$ and $\alpha_k^A=\gamma_A+\tau_k+d$.
The basic reproduction number $\R_0$ for \eqref{sys:SEIARS} is defined as the spectral radius of the next-generation matrix $FV^{-1}$.
The inverse of $V$ is computed using block matrix inversion,
\begin{equation*}
V^{-1} = \begin{pmatrix}
(\diag(\bvarepsilon) + d\mathbb{I}_{2n})^{-1} & \bzero_{2n\times 2n} \\
M^{-1} P (\diag(\bvarepsilon) + d\mathbb{I}_{2n})^{-1} & M^{-1}
\end{pmatrix}.
\end{equation*}
Computing the product $FV^{-1}$, we get
\begin{align*}
FV^{-1} &= 
\begin{pmatrix}
\bzero_{2n\times 2n} & \diag(\bc) \\
\bzero_{2n\times 2n} & \bzero_{2n\times 2n}
\end{pmatrix}
\begin{pmatrix}
(\diag(\bvarepsilon) + d\mathbb{I}_{2n})^{-1} & \bzero_{2n\times 2n} \\
M^{-1} P (\bvarepsilon + d\mathbb{I}_{2n})^{-1} & M^{-1}
\end{pmatrix} \\
&=
\begin{pmatrix}
    \diag(\bc) M^{-1} P (\diag(\bvarepsilon) + d\mathbb{I}_{2n})^{-1} & \diag(\bc) M^{-1} \\
    \bzero_{2n\times 2n} & \bzero_{2n\times 2n}
\end{pmatrix},
\end{align*}
so the non-zero eigenvalues of $FV^{-1}$ are entirely determined by the top-left block.
Consequently, the spectral radius $\rho(FV^{-1})$ is the spectral radius of this sub-matrix, giving
\begin{align}
\R_0 &= 
\rho \left( \diag(\bc) 
\begin{pmatrix}
W_I^{-1} & W_I^{-1}\diag(\btau) W_A^{-1} \\
\bzero_{n\times n} & W_A^{-1}\end{pmatrix}
P (\diag(\bvarepsilon) + d\mathbb{I}_{2n})^{-1} \right),
\label{eq:R_0-computation}
\end{align}
which is the expression \eqref{eq:R_0} in the result.
In that matrix, the inverses of $W_X$ take the form
\[
(W_X^{-1})_{ij}=\begin{cases}
\dfrac{1}{\alpha_i^X} & \text{if } i=j \\
\dfrac{\gamma_X^{i-j}}{\prod_{k=j}^i \alpha_k^X} & \text{if } i>j \\
0 & \text{if } i<j.
\end{cases}
\]

Hypotheses (A1) to (A4) in Theorem 2 of \cite{vdDWatmough2002} are automatically satisfied given the method of construction of matrices $F$ and $V$.
So there remains to show that the disease-free equilibrium of the system without disease is always locally asymptotically stable (hypothesis (A5) in \cite[Theorem 2]{vdDWatmough2002}).
In the absence of disease, \eqref{sys:SEIARS} reduces to the $(S,\bE,R)$ subsystem,
\begin{subequations}
\label{sys:population-level-no-disease}
\begin{align}
\dot S &= b + \langle \bvarepsilon \circ (\bone_{2n} - \bm{\delta}), \bE \rangle +\nu R -dS, \nonumber \\
\dot{\bE} &= -(\bvarepsilon+d\mathbb{I}_{2n})\bE, \nonumber\\
\dot R &= -(\nu+d)R, \nonumber
\end{align}
\end{subequations}
whose Jacobian matrix evaluated at the DFE takes the block-triangular form
\[
J=\begin{pmatrix}
    -d & \bu^T & \nu \\
    \bzero_{2n,1} & -(\bvarepsilon+d\mathbb{I}_{2n}) & \bzero_{2n,1} \\
    0 & \bzero_{1,2n} & -(\nu+d)
\end{pmatrix},
\]
where the vector $\bu = \bvarepsilon \circ (\bone_{2n} - \bm{\delta})$.
Because $J$ is block upper-triangular, its eigenvalues are precisely $-d$, $-(\nu+d)$, and the diagonal entries of the diagonal matrix $-(\bvarepsilon+d\mathbb{I}_{2n})$, which are $-(\varepsilon_i^X+d)$. 
All eigenvalues of $J$ are strictly negative and thus the (linear) system \eqref{sys:population-level-no-disease} converges to the equilibrium $(S,\bE^T,R)=(b/d,\bzero_{2n}^T,0)$, i.e., the disease-free equilibrium.
The result follows.

\subsection{Global asymptotic stability of the DFE when $\R_0\leq 1$}
\label{app:proofs-GAS-DFE}
We investigate the global stability of the disease-free equilibrium using the framework of \cite{KamgangSallet2008}. 
This framework requires writing the system in pseudo-triangular form and finding a constant upper-bound Metzler matrix for the infectious subsystem.

We group the non-infectious compartments $x_1 = (S, R)^T$ and the exposed and infectious compartments $x_2 = (\bE^T, \bY^T)^T$. We write \eqref{sys:SEIARS} as:
\begin{subequations}\label{sys:KS-pseudo-triangular-form}
\begin{align}
\dot{x}_1 &= A_1(x)(x_1-x_1^0) + A_{12}(x)x_2, \nonumber \\
\dot{x}_2 &= A_2(x)x_2, \nonumber
\end{align}
\end{subequations}
where the matrix $A_2(x)$ driving the infectious dynamics is given by the block matrix:
\begin{equation*}
A_2(x) = \begin{pmatrix}
-(\diag(\bvarepsilon) + d\mathbb{I}_{2n}) & \frac{S}{N}\diag(\bc) \\
P & -M
\end{pmatrix}.
\end{equation*}

For the domain $\mathcal{D} = \{x = (x_1,x_2) \in \Omega, x_1 \neq 0\}$, the set is positively invariant (see Theorem \ref{th:positivity}). Furthermore, for all $x \in \mathcal{D}$, the off-diagonal elements of $A_2(x)$ are non-negative, meaning $A_2(x)$ is a Metzler matrix.

Because the non-infectious subsystem $\dot{x}_1 = A_1(x_1, 0)(x_1-x_1^0)$ reduces to the linear equations for $(S, R)$ without disease, it trivially globally asymptotically converges to $x_1^0 = (S^0, 0)^T$. Thus, hypothesis $\mathcal{H}_2$ of \cite{KamgangSallet2008} is satisfied.

To apply Theorem 4.3 in \cite{KamgangSallet2008}, we must find a constant matrix $\overline{A}_2$ such that $A_2(x) \leq \overline{A}_2$ for all $x \in \mathcal{D}$. 

A major advantage of structuring the exposed compartment as the tracking vector $\bE$ is that the progression fractions $\bm{\delta}$ and $\bm{\pi}$ are constant parameters. The only state-dependent non-linearity in $A_2(x)$ is the ratio $S/N$. Since $S \leq N$ for all biologically feasible states, $S/N \leq 1$. Thus, the supremum of $A_2(x)$ is evaluated exactly at the DFE, providing the constant upper-bound matrix
\begin{equation*}
\overline{A}_2 = \begin{pmatrix}
-(\diag(\bvarepsilon) + d\mathbb{I}_{2n}) & \diag(\bc) \\
P & -M
\end{pmatrix} \equiv F - V.
\end{equation*}

Since $\overline{A}_2$ is $F - V$ from the Next-Generation Matrix method, standard results from \cite{vdDWatmough2002} guarantee that the maximum real part of the eigenvalues of $\overline{A}_2$ has the same sign as $\R_0 - 1$. 
Consequently, $\overline{A}_2$ is a stable Metzler matrix ($\rho(\overline{A}_2) \le 0$) if and only if $\R_0 \leq 1$. 
Applying \cite[Theorem 4.3]{KamgangSallet2008}, we have the global asymptotic stability of the DFE when $\R_0\leq 1$.

\subsection{Existence and uniqueness of the endemic equilibrium}
\label{app:existence-uniqueness-EEP}

To prove existence and uniqueness of the endemic equilibrium, we first need the following result, which gives an explicit (albeit complicated) expression for the basic reproduction number.
\begin{lemma}
Let $H = \diag(\bc) M^{-1} P (\bvarepsilon + d \II_{2n})^{-1}$. Then, the spectral radius of $H$ satisfies
\begin{equation*}
\rho(H) = 
\rho\left(\diag(\bc) M^{-1}P (\bvarepsilon + d \II_{2n})^{-1}\right) = \rho\left(P (\bvarepsilon + d \II_{2n})^{-1}\diag(\bc) M^{-1}\right).
\end{equation*}
Furthermore, the basic reproduction number \eqref{eq:R_0} can be written as
\begin{equation}\label{eq:R_0-proof-existence-EEP}
\R_0= \frac{(h_{1,1} + h_{n+1,n+1}) + \sqrt{(h_{1,1} - h_{n+1,n+1})^2 + 4 h_{1,n+1} h_{n+1,1}}}{2},
\end{equation}
where the $h_{i,j}$ are given by \eqref{sys:h_ij}.
\end{lemma}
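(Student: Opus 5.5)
The plan rests on two facts: $\rho(XY)=\rho(YX)$ for square matrices, and the low rank of $P$, which has only two nonzero rows (rows $1$ and $n+1$, equal to $\bp_I^T$ and $\bp_A^T$). Set $D=(\diag(\bvarepsilon)+d\II_{2n})^{-1}$.

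For the first identity, take $X=\diag(\bc)M^{-1}$ and $Y=PD$. Both are $2n\times 2n$, so $XY$ and $YX$ have the same characteristic polynomial. Hence
\[
\rho(H)=\rho(XY)=\rho(YX)=\rho\left(PD\diag(\bc)M^{-1}\right),
\]
which is the displayed equality. It is also consistent with Appendix~\ref{app:proofs-R_0-LAS}, where $\R_0=\rho(H)$ was obtained from the top-left block of $FV^{-1}$.

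For the explicit formula, define $G=PD\diag(\bc)M^{-1}$; I read $h_{i,j}$ as its entries. Since $P$ has nonzero rows only at indices $1$ and $n+1$, the same holds for $G$. Reordering coordinates so that indices $1$ and $n+1$ come first, $G$ is block upper-triangular:
\[
\begin{pmatrix} G_{11} & G_{12}\\ \bzero & \bzero\end{pmatrix},
\qquad
G_{11}=\begin{pmatrix} h_{1,1} & h_{1,n+1}\\ h_{n+1,1} & h_{n+1,n+1}\end{pmatrix}.
\]
So the nonzero spectrum of $G$ is exactly the spectrum of $G_{11}$. The characteristic polynomial of $G_{11}$ is $\lambda^2-(h_{1,1}+h_{n+1,n+1})\lambda+(h_{1,1}h_{n+1,n+1}-h_{1,n+1}h_{n+1,1})$. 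Its discriminant is $(h_{1,1}-h_{n+1,n+1})^2+4h_{1,n+1}h_{n+1,1}$, which gives the two roots.

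The main point is that the spectral radius is the root with the plus sign. The entries of $G_{11}$ are
\[
h_{1,j}=\langle \bp_I\circ\mathrm{diag}(D)\circ\bc,\ M^{-1}e_j\rangle,
\]
and similarly for $h_{n+1,j}$ with $\bp_A$. Because $M$ is a nonsingular M-matrix (it is $V$'s lower block, with a nonnegative inverse built from $W_I^{-1}$, $W_A^{-1}$ and $\diag(\btau)$ as in Appendix~\ref{app:proofs-R_0-LAS}), $M^{-1}\ge 0$. All parameters are nonnegative, so every $h_{i,j}\ge0$. The discriminant is therefore nonnegative, both roots are real, and the larger one, $\tfrac12\big(\mathrm{tr}+\sqrt{\Delta}\big)$, dominates the absolute value of the smaller. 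This holds because $\mathrm{tr}\ge0$ and $\sqrt{\Delta}\ge |h_{1,1}-h_{n+1,n+1}|$ (Perron–Frobenius for $2\times2$ nonnegative matrices). This yields \eqref{eq:R_0-proof-existence-EEP}.

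The remaining work, which I expect to be the tedious part, is to write $h_{i,j}$ as explicit sums using the closed form of $W_X^{-1}$. This amounts to computing $M^{-1}=\begin{pmatrix}W_I^{-1} & W_I^{-1}\diag(\btau)W_A^{-1}\\ \bzero & W_A^{-1}\end{pmatrix}$, extracting its first and $(n+1)$-th columns, and taking inner products with $\bp_X\circ\bc/(\bvarepsilon+d)$. This gives \eqref{sys:h_ij}.
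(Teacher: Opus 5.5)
Your proposal is correct and follows the paper's proof. It uses the same factorization $X=\diag(\bc)M^{-1}$, $Y=P(\diag(\bvarepsilon)+d\II_{2n})^{-1}$ in $\rho(XY)=\rho(YX)$, then the row sparsity of $P$ to reduce to the $2\times 2$ block with entries $h_{1,1},h_{1,n+1},h_{n+1,1},h_{n+1,n+1}$. You add two justifications the paper leaves implicit: the block-triangular argument for the reduction, and the use of $M^{-1}\ge 0$ to show the plus-sign root is the spectral radius. When you do the explicit computation, you will also find that the inner factor $\gamma_A^{i-1}/\prod_{m=1}^{i}\alpha_m^A$ in the paper's $h_{1,n+1}$ and $h_{n+1,n+1}$ should be indexed by $j$, not $i$.
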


\begin{proof}
First, recall that $\diag(\bc), \diag(\bvarepsilon), M,P \in \mathbb{R}^{2n \times 2n}$. 
The proof relies on a structural property of matrices: for any two square matrices $A, B \in \mathbb{R}^{m \times m}$, the products $AB$ and $BA$ share non-zero eigenvalues with matching algebraic multiplicities \cite[Theorem 1.3.22]{horn2012matrix}.
Consequently, their spectral radii are equal, i.e., $\rho(AB) = \rho(BA)$.
Using this with $A = \diag(\bc) M^{-1}$ and $B = P (\diag(\bvarepsilon) + d \II_{2n})^{-1}$ implies that
\begin{equation*}
\rho\left(\diag(\bc) M^{-1}P (\diag(\bvarepsilon) + d \II_{2n})^{-1}\right) 
= \rho\left(P (\diag(\bvarepsilon) + d \II_{2n})^{-1}\diag(\bc) M^{-1}\right).
\end{equation*}
Now, let 
\begin{equation*}
    \tilde{H}= P  (\varepsilon + d \II_{2n})^{-1} \diag(\bc) M^{-1}
\end{equation*}
Since $\diag(\bc)$ and $(\diag(\bvarepsilon) + d \II_{2n})^{-1}$ are both diagonal matrices, we have
\begin{equation*}
\diag(\bc) (\diag(\bvarepsilon) + d \II_{2n})^{-1} = (\diag(\bvarepsilon) + d \II_{2n})^{-1} \diag(\bc).
\end{equation*}

Left-multiplication by $P$ in $\tilde{H}$ means that $\tilde{H}$ inherits the row-sparsity of $P$, having only rows $1$ and $n+1$ non-zero. 
This reduces the $2n \times 2n$ spectral radius problem to finding the spectral radius of the following $2 \times 2$ matrix,
\begin{equation*}
H_r= \begin{pmatrix}
h_{1,1} & h_{1,n+1} \\
h_{n+1,1} & h_{n+1,n+1}
\end{pmatrix},
\end{equation*} 
where
\begin{subequations}
    \label{sys:h_ij}
    \begin{align}
    h_{1,1} &= \sum_{k=1}^n \frac{\pi_k^I \delta_k^I \varepsilon_k^I c_k^I}{\varepsilon_k^I + d} \;\frac{\gamma_I^{k-1}}{\prod_{m=1}^k \alpha_m^I} \\
    h_{n+1,1} &= \sum_{k=1}^n \frac{(1 - \pi_k^I) \delta_k^I \varepsilon_k^I c_k^I}{\varepsilon_k^I + d} \;\frac{\gamma_I^{k-1}}{\prod_{m=1}^k \alpha_m^I} \\
    h_{1,n+1} &= \sum_{k=1}^n \left[ \frac{\pi_k^I \delta_k^I \varepsilon_k^I c_k^I}{\varepsilon_k^I + d} \left( \sum_{j=1}^k \tau_j \,\frac{\gamma_I^{k-j}}{\prod_{m=j}^k \alpha_m^I} \, \frac{\gamma_A^{i-1}}{\prod_{m=1}^i \alpha_m^A}\right)  \right] \\
    &\qquad + \sum_{k=1}^n \frac{\pi_k^A \delta_k^A \varepsilon_k^A c_k^A}{\varepsilon_k^A + d} \;\frac{\gamma_A^{k-1}}{\prod_{m=1}^k \alpha_m^A} 
    \nonumber \\
    h_{n+1,n+1} &= \sum_{k=1}^n \left[ \frac{(1 - \pi_k^I) \delta_k^I \varepsilon_k^I c_k^I}{\varepsilon_k^I + d} \left( \sum_{j=1}^k \tau_j \, \frac{\gamma_I^{k-j}}{\prod_{m=j}^k \alpha_m^I} \, \frac{\gamma_A^{i-1}}{\prod_{m=1}^i \alpha_m^A}\right)  \right] \\
    & \qquad + \sum_{k=1}^n \frac{(1 - \pi_k^A) \delta_k^A \varepsilon_k^A c_k^A}{\varepsilon_k^A + d} \;\frac{\gamma_A^{k-1}}{\prod_{m=1}^k \alpha_m^A}. 
    \nonumber
    \end{align}
\end{subequations}
Thus, eigenvalues of $H_r$ are solutions of 
\begin{equation*}
z^2-(h_{1,1} + h_{n+1,n+1})z+h_{1,1} h_{n+1,n+1}-h_{n+1,1} h_{1,n+1}=0.
\end{equation*}
It follows that the basic reproduction number $\R_0$ is given by \eqref{eq:R_0-proof-existence-EEP}.
\end{proof}   

We can now tackle the existence and uniqueness proof.

\subsubsection*{States $I_k^\star$ and $A_k^\star$}
For $k\in \{2,\dots,n\}$, the equilibrium relations yield
\begin{subequations}
\label{sys:EP-conditions-EEP}
\begin{align}
I_k^\star &= \dfrac{\gamma_I}{\alpha^I_k}I_{k-1}^\star+\dfrac{\tau_k\gamma_A}{\alpha^I_k\alpha^A_k}A_{k-1}^\star, \label{sys:EP-conditions-EEP-Ik} \\
A_k^\star &= \dfrac{\gamma_A}{\alpha^A_k}A_{k-1}^\star. \label{sys:EP-conditions-EEP-Ak}
\end{align}    
\end{subequations}
By repeated substitution on \eqref{sys:EP-conditions-EEP-Ak}, the expression for the asymptomatic infectious compartments is written for $k\in\{2,\ldots,n\}$ as
\begin{equation}\label{eq:A_k-fct-A_1-EEP}
A_k^\star = \dfrac{\gamma_A^{k-1}}{\prod_{i=2}^{k}\alpha^A_i}A_{1}^\star.
\end{equation}
Similarly, executing iterations on \eqref{sys:EP-conditions-EEP-Ik} leads to
\begin{equation}
I_k^\star = \left( \prod_{j=2}^k \frac{\gamma_I}{\alpha^I_j} \right) I_1^\star + \sum_{m=2}^k \left( \prod_{j=m+1}^k \frac{\gamma_I}{\alpha^I_j} \right) \frac{\tau_m\gamma_A}{\alpha^I_m \alpha^A_m} A_{m-1}^\star.
\end{equation}
Substituting \eqref{eq:A_k-fct-A_1-EEP} into the summation term yields
\begin{equation*}
I_k^\star = \frac{\gamma_I^{k-1}}{\prod_{j=2}^k \alpha^I_j} I_1^\star + A_1^\star \sum_{m=2}^k \left( \frac{\tau_m \;\gamma_I^{k-m} \;\gamma_A^{m-1}}{\prod_{j=m}^k \alpha^I_j \;\prod_{j=2}^{m} \alpha^A_j}\right).
\end{equation*}
Denoting
\begin{equation}
K_{1k} = \frac{\gamma_A^{k-1}}{\prod_{j=1}^{k} \alpha^A_j}, 
\quad 
K_{2k} = \dfrac{\gamma_I^{k-1}}{\prod_{i=2}^{k}\alpha^I_i} 
\quad \text{and} \quad 
K_{3k} = \sum_{m=2}^k \left( \frac{\tau_m \;\gamma_I^{k-m} \;\gamma_A^{m-1}}{\prod_{j=m}^k \alpha^I_j \;\prod_{j=2}^{m} \alpha^A_j}\right),
\end{equation}
we have, for $k\in \{2,\dots,n\}$,
\begin{equation}\label{eq:I_k-A_k-fct-A_1-EEP}
I_k^\star = K_{2k} I_1^\star + K_{3k} A_1^\star 
\quad \text{and}\quad 
A_k^\star = K_{1k} A_1^\star.
\end{equation}

\subsubsection*{States $A_1^\star$, $I_1^\star$ and $R^\star$}
Inflows into the first stage of clinical manifestation $I_1$ and $A_1$ satisfy
\begin{subequations}
\begin{align}
\label{eqEEA1}
I_1^\star &= \frac{1}{\alpha_1^I}\sum_{k=1}^n \left[\left(\pi_k^I+ \frac{(1-\pi_k^I)\tau_1}{\alpha_1^A}\right) \delta_k^I \varepsilon_k^I E_k^{I\star} + \left(\pi_k^A+ \frac{(1-\pi_k^A)\tau_1}{\alpha_1^A}\right)  \delta_k^A\varepsilon_k^A E_k^{ \star A} \right], \\
A_1^\star &= \frac{1}{\alpha_1^A}\sum_{k=1}^n \left[ (1-\pi_k^I) \delta_k^I \varepsilon_k^I E_k^{ \star I} + (1-\pi_k^A) \delta_k^A\varepsilon_k^A E_k^{A\star} \right].
\label{eqEEI1}
\end{align}
\end{subequations}
Furthermore, the equilibrium of the recovered population $R^\star$ takes the form
\begin{equation}
\label{eq:EEP-R}
R^\star = \frac{\gamma_I I_n^\star + \gamma_A A_n^\star}{\nu+d} = 
\dfrac{\gamma_I K_{2n} I_1^\star+\left(\gamma_A K_{1n}+ \gamma_I K_{3n} \right)A_1^\star}{\nu+d}.
\end{equation}

\subsubsection*{States $E_k^I$ and $E_k^A$}
First, recall that the forces of exposure from $I_1$ and $A_1$ at the equilibrium point are, respectively, $\lambda_1^{I\star} = {c_1^I I_1^{\star}}/{N^{\star}}$ and $\lambda_1^{A\star} = {c_1^A A_1^{\star}}/{N^{\star}}$. 
So the remaining forces of exposure, $k \in \{2,\dots,n\}$, satisfy
\begin{subequations}\label{sys:lambda_k-star}
\begin{align}
\lambda_k^{I\star} &= \frac{c_k^I I_k^{\star}}{N^{\star}} = \frac{c_k^I (K_{2k} I_1^{\star}+K_{3k} A_1^\star)}{N^{\star}} = \frac{c_k^I K_{2k}}{c_1^I}\lambda_1^{I\star} + \frac{c_k^I K_{3k}}{c_1^A}\lambda_1^{A\star}, \\
\lambda_k^{A\star} &= \frac{c_k^A A_k^\star}{N^\star} = \frac{c_k^A K_{1k}}{c_1^A}\lambda_1^{A\star}.
\end{align}    
\end{subequations}
Thus, components of the endemic equilibrium point for the exposure classes are governed for $k=2,\dots,n$ by
\begin{align*}
E_k^{I\star} &= \dfrac{\lambda_k^{I\star} S^\star}{\varepsilon_k^I+d} = 
\frac{c_k^I K_{2k} \lambda_1^{I\star} S^{\star}}{c_1^I(\varepsilon_k^I+d)} 
+ \frac{c_k^I K_{3k} \lambda_1^{A\star} S^{\star}}{c_1^A(\varepsilon_k^I+d)}, \\
E_k^{A\star} &= \dfrac{\lambda_k^{A\star} S^\star}{\varepsilon_k^A+d} = 
\dfrac{c_k^A K_{1k}\lambda_1^{A\star}S^{\star}}{c_1^A(\varepsilon_k^A+d)}.
\end{align*}
However, 
$$E_1^{I\star} = \dfrac{\lambda_1^{I\star} S^\star}{\varepsilon_1^I+d}\quad \text{and}\quad E_1^{A\star}= \dfrac{\lambda_1^{A\star} S^\star}{\varepsilon_1^A+d}.$$
This gives that
\begin{equation}
\label{EkEnd}
E_k^{I\star} = S^\star \left( K_{4k}^I \lambda_1^{I\star} + K_{5k}^I \lambda_1^{A\star} \right)
\quad \text{and} \quad 
E_k^{A\star} = S^\star K_{4k}^A \lambda_1^{A\star},
\end{equation}
where
\begin{align*}
K_{41}^A &= \dfrac{1}{\varepsilon_1^A+d}, \quad K_{41}^I= \dfrac{1}{\varepsilon_1^I+d}, \quad  K_{51}^I = 0, \\
K_{4k}^A &= \dfrac{c_k^A K_{1k}}{c_1^A(\varepsilon_k^A+d)},
\quad K_{4k}^I = \dfrac{c_k^I K_{2k}}{c_1^I(\varepsilon_k^I+d)}, \quad K_{5k}^I = \dfrac{c_k^I K_{3k}}{c_1^A(\varepsilon_k^I+d)} \quad (\forall k \ge 2).
\end{align*}
Substituting $E_k^{I\star}$ and $E_k^{A\star}$ into equations \eqref{eqEEA1} and \eqref{eqEEI1}, we have the following relation
\begin{subequations}
\begin{align}
\label{A1_comp} A_1^\star &= S^\star \left( K_{6} \lambda_1^{I\star} + K_{7} \lambda_1^{A\star} \right), \\
\label{I1_comp} I_1^\star &= S^\star \left( K_{8} \lambda_1^{I\star} + K_{9} \lambda_1^{A\star} \right),\\
\label{EER_star}
R^{\star}&= S^{\star}\left(K_{10} \lambda_1^{I\star} +K_{11}\lambda_1^{A\star}\right),
\end{align}    
\end{subequations}
where
\begin{align*}
K_{6} &= \frac{1}{\alpha_1^A}\sum_{k=1}^n (1-\pi_k^I) \delta_k^I \varepsilon_k^I K_{4k}^I,\\
K_{7} &= \frac{1}{\alpha_1^A}\sum_{k=1}^n \left[ (1-\pi_k^I) \delta_k^I \varepsilon_k^I K_{5k}^I + (1-\pi_k^A) \delta_k^A\varepsilon_k^A K_{4k}^A \right], \\
K_{8} &= \frac{1}{\alpha_1^I}\sum_{k=1}^n \left(\pi_k^I+ \frac{(1-\pi_k^I)\tau_1}{\alpha_1^A}\right) \delta_k^I \varepsilon_k^I K_{4k}^I, \\
K_{9} &= \frac{1}{\alpha_1^I}\sum_{k=1}^n \left[\left(\pi_k^I+ \frac{(1-\pi_k^I)\tau_1}{\alpha_1^A}\right) \delta_k^I \varepsilon_k^I K_{5k}^I + \left(\pi_k^A+ \frac{(1-\pi_k^A)\tau_1}{\alpha_1^A}\right)  \delta_k^A\varepsilon_k^A K_{4k}^A \right],\\
K_{10} &= \dfrac{\left(\gamma_A K_{1n}+ \gamma_I K_{3n} \right)K_{6} + \gamma_I K_{2n} K_{8}}{\nu+d} \\
\intertext{and}
K_{11} &= \dfrac{\left(\gamma_A K_{1n}+ \gamma_I K_{3n} \right)K_{7} + \gamma_I K_{2n} K_{9}}{\nu+d}.
\end{align*}

\subsubsection*{State  $S^\star$}
Using equations \eqref{EkEnd} and \eqref{EER_star}, the susceptible individual at the endemic equilibrium is :
\begin{align}
S^\star =& \frac{b + \sum_{k=1}^n \left[ (1-\delta_k^I)\varepsilon_k^I E_k^{I\star} + (1-\delta_k^A)\varepsilon_k^A E_k^{A\star} \right] +\nu R^\star }{\lambda_{\text{exp}}^\star+d} \nonumber\\
=& \frac{b + S^\star \left[ \left( \sum_{k=1}^n (1-\delta_k^I)\varepsilon_k^I K_{4k}^I + \nu K_{10}\right)\lambda_1^{I\star} + \left[\sum_{k=1}^n \left( (1-\delta_k^I)\varepsilon_k^I K_{5k}^I + (1-\delta_k^A)\varepsilon_k^A K_{4k}^A \right) + \nu K_{11}\right]\lambda_1^{A\star} \right]}{\lambda_{\text{exp}}^\star+d}. \label{Send_exp}
\end{align}

In other hand, the expressions of equations \eqref{A1_comp} and \eqref{I1_comp} give that the total force of exposure $\lambda_{\text{exp}}^\star$ is expressed  as a linear combination of $\lambda_1^{I\star}$ and $\lambda_1^{A\star}$. It is
\begin{equation}
\label{end_expo}
\lambda_{\text{exp}}^\star = \lambda_1^{I\star}+\lambda_1^{A\star}+\sum_{k=2}^n \lambda_k^{I\star} + \sum_{k=2}^n \lambda_k^{A\star}= K_{12}\lambda_1^{I\star} + K_{13} \lambda_1^{A\star},
\end{equation}
where
\begin{equation*}
K_{12}= 1+\sum_{k=2}^n \frac{c_k^I K_{2k}}{c_1^I} \quad \text{and} \quad K_{13} =1+ \sum_{k=2}^n \left( \frac{c_k^I K_{3k}}{c_1^A} + \frac{c_k^A K_{1k}}{c_1^A} \right).
\end{equation*}
Then, substituting \eqref{end_expo} into \eqref{Send_exp} gives that
\begin{equation*}
\label{NS_clean}
S^\star = \frac{b}{d + K_{14} \lambda_1^{I\star} + K_{15} \lambda_1^{A\star}},
\end{equation*}
where
\begin{align*}
K_{14} =& K_{12}- (K_{16} + \nu K_{10}),\quad K_{15} = K_{13} - (K_{17} + \nu K_{11})\\
K_{16} =& \sum_{k=1}^n (1-\delta_k^I)\varepsilon_k^I K_{4k}^I \quad\text{and}\quad 
K_{17} = \sum_{k=1}^n \left[ (1-\delta_k^I)\varepsilon_k^I K_{5k}^I + (1-\delta_k^A)\varepsilon_k^A K_{4k}^A \right].
\end{align*}

\subsubsection*{The total population $N^\star$}
The total population equation at the endemic equilibrium is given by
\begin{equation*}
\label{eq:N_definition}
N^\star = S^\star + \sum_{k=1}^n E_k^{I\star} + \sum_{k=1}^n E_k^{A\star} + \sum_{k=1}^n I_k^\star + \sum_{k=1}^n A_k^\star + R^\star.
\end{equation*}
However, we have the following, in which we introduce notations $K_{18}$ to $K_{23}$.
\begin{align*}
\sum_{k=1}^n E_k^{A\star} &= S^\star \left( \sum_{k=1}^n K_{4k}^A \right) \lambda_1^{A\star} =: S^\star K_{18}\lambda_1^{A\star}, \\
\sum_{k=1}^n E_k^{I\star} &= S^\star \left( \sum_{k=1}^n K_{4k}^I \right) \lambda_1^{I\star} + S^\star \left( \sum_{k=1}^n K_{5k}^I \right) \lambda_1^{A\star} =: S^\star K_{19}\lambda_1^{I\star} + S^\star K_{20} \lambda_1^{A\star},\\
\sum_{k=1}^n A_k^\star &= A_1^\star + \left( \sum_{k=2}^n K_{1k} \right) A_1^\star 
=: K_{21} A_1^\star =K_{21} S^\star \left( K_{6} \lambda_1^{I\star} + K_{7} \lambda_1^{A\star} \right),\\
\sum_{k=1}^n I_k^\star &= I_1^\star +\left( \sum_{k=2}^n K_{2k} \right) I_1^\star + \left( \sum_{k=2}^n K_{3k} \right) A_1^\star =: K_{22} I_1^\star + K_{23}A_1^\star,\\
&= K_{22} S^\star \left( K_{8} \lambda_1^{I\star} + K_{9} \lambda_1^{A\star} \right) + K_{23}S^\star \left( K_{6} \lambda_1^{I\star} + K_{7} \lambda_1^{A\star} \right),
\end{align*}
where
\begin{equation*}
K_{18}= \sum_{k=1}^n K_{4k}^A,\,\;K_{19}= \sum_{k=1}^n K_{4k}^I,\,\; K_{20} = \sum_{k=1}^n K_{5k}^I,\,\; K_{21} = 1+\sum_{k=2}^n K_{1k},\,\; K_{22} = 1+\sum_{k=2}^n K_{2k}\,\; \text{and}\,\;K_{23}= \sum_{k=2}^n K_{3k}.
\end{equation*}
Using the above expressions and the fact that $R^\star = S^\star \left( K_{10} \lambda_1^{I\star} + K_{11} \lambda_1^{A\star} \right)$, we have:
\begin{equation}
\label{N_end}
N^\star = S^\star \left[ 1 + K_{24} \lambda_1^{I\star} + K_{25} \lambda_1^{A\star} \right]
\end{equation}
with
\begin{equation*}
K_{24} = K_{19}+ K_{22} K_{8} + (K_{21} + K_{23}) K_{6} + K_{10}\quad \text{and}\quad 
K_{25} = K_{18}+ K_{20} + K_{22} K_{9} + (K_{21} + K_{23}) K_{7} + K_{11}.
\end{equation*}

\subsubsection{Condition for the existence of the endemic equilibrium}
The definitions \eqref{sys:lambda_k-star} give that
\begin{equation}
\label{AI1_N}
A_1^\star = \frac{\lambda_1^{A\star} N^\star}{c_1^A} \quad \text{and} \quad I_1^\star = \frac{\lambda_1^{I\star} N^\star}{c_1^I}.
\end{equation}
Using \eqref{AI1_N} in equations \eqref{A1_comp}-\eqref{I1_comp} gives
\begin{align*}
\frac{\lambda_1^{A\star} N^\star}{c_1^A} &= S^\star \left( K_{6} \lambda_1^{I\star} + K_{7} \lambda_1^{A\star} \right), \\ 
\frac{\lambda_1^{I\star} N^\star}{c_1^I} &= S^\star \left( K_{8} \lambda_1^{I\star} + K_{9} \lambda_1^{A\star} \right). 
\end{align*}

To establish the existence of the endemic equilibrium, we compute ${I_1^\star}/{A_1^\star}$ using \eqref{A1_comp}-\eqref{I1_comp} and $\lambda_1^{I\star}={c_1^I I_1^{\star}}/{N^{\star}} \quad \text{and}\quad \lambda_1^{A\star}={c_1^A A_1^{\star}}/{N^{\star}}$.
It gives:
\begin{equation}
\label{EqlambdaIA}
\frac{c_1^A \lambda_1^{I\star}}{c_1^I \lambda_1^{A\star}} = \frac{K_{8} \lambda_1^{I\star} + K_{9} \lambda_1^{A\star}}{K_{6} \lambda_1^{I\star} + K_{7} \lambda_1^{A\star}}.
\end{equation}
Let $x^\star = {\lambda_1^{I\star}}/{\lambda_1^{A\star}}$. Then, equation \eqref{EqlambdaIA} gives that $x^\star$ is the solution of the following equation.
\begin{equation}
\label{ratio_sol}
\left(c_1^A K_{6}\right) x^2 + \left(c_1^A K_{7} - c_1^I K_{8}\right) x - c_1^I K_{9} = 0.
\end{equation}
It is easy to see that \eqref{ratio_sol} has a unique positive solution which is
\begin{equation}
\label{sol_xstar}
x^\star = \dfrac{-\left(c_1^A K_{7} - c_1^I K_{8}\right) + \sqrt{\left(c_1^A K_{7} - c_1^I K_{8}\right)^2 + 4 c_1^I c_1^A K_{6} K_{9}}}{2 c_1^A K_{6}}.
\end{equation}

Equalizing $A_1^\star$ in \eqref{A1_comp} and that from \eqref{AI1_N}, we obtain the following:
\begin{equation}
\label{Fact_AS}
\frac{\lambda_1^{A\star} N^\star}{c_1^A} = S^\star \left( K_{6} \lambda_1^{I\star} + K_{7} \lambda_1^{A\star} \right),
\end{equation}
Then, using the expression of $N^\star$ in equation \eqref{N_end} gives
\begin{equation*}
\lambda_1^{A\star}S^\star \left[ 1 + \left(K_{24} \frac{\lambda_1^{I\star}}{\lambda_1^{A\star}} + K_{25}  \right)\lambda_1^{A\star}\right] = c_1^A S^\star \left( K_{6} \frac{\lambda_1^{I\star}}{\lambda_1^{A\star}} + K_{7}  \right)\lambda_1^{A\star}.
\end{equation*}
This implies
\begin{equation}
1 +  \left(K_{24} x^{\star} + K_{25}\right)\lambda_1^{A\star} = B^\star,
\end{equation}
where
\begin{equation}
B^\star = c_1^A (K_{6} x^\star + K_{7}).
\end{equation}
The resolution gives that
\begin{equation}
\label{End_lambda_A1}
\lambda_1^{A\star} = \frac{B^\star - 1}{K_{24} x^\star + K_{25}}.
\end{equation}
Since $x^\star>0$, the denominator $K_{24} x^\star + K_{25}>0$.

The final part of the proof consists in showing that the explicit solution $\lambda_1^{A\star}$ given in \eqref{End_lambda_A1} is positive when $\mathcal{R}_0 > 1$. For this, we show that $B^\star > 1 \iff \mathcal{R}_0 > 1$\\
Since $x^\star > 0$, the equation \eqref{ratio_sol}  gives 
\begin{equation}
c_1^A K_{6} x^\star + c_1^A K_{7} - c_1^I K_{8} - \frac{c_1^I K_{9}}{x^\star} = 0
\end{equation}
Recalling $B^\star = c_1^A (K_{6} x^\star + K_{7})$, we substitute it into the above equation to obtain the following identities:
\begin{subequations}\label{sys:proof-GAS-condition-c12}
\begin{align}
\label{M_id1}
c_1^A K_{6} x^\star &= B^\star - c_1^A K_{7} \\
\label{M_id2}
\frac{c_1^I K_{9}}{x^\star} &= B^\star - c_1^I K_{8}.
\end{align}    
\end{subequations}
Multiplying together the left-hand and right-hand sides of \eqref{sys:proof-GAS-condition-c12} cancels out the ratio $x^\star$:
\begin{equation*}
c_1^A K_{6} x^\star\;\frac{c_1^I K_{9}}{x^\star} = (B^\star - c_1^A K_{7})(B^\star - c_1^I K_{8}).
\end{equation*}
Its expanded form is
\begin{equation*}
(B^\star)^2 - (c_1^I K_{8} + c_1^A K_{7})B^\star + c_1^I c_1^A \left(K_{7} K_{8} -  K_{6} K_{9}\right) = 0.
\end{equation*}
Elsewhere, one can observe that 
$$h_{1,1} + h_{n+1,n+1} = c_1^I K_{8}+ c_1^A K_{7} \quad \text{and}\quad h_{1,1} h_{n+1,n+1}-h_{n+1,1} h_{1,n+1}= c_1^I c_1^A \left(K_{7} K_{8} - K_{6} K_{9}\right).$$

Since $B^\star>0$, one has $\mathcal{R}_0=B^\star$. Then
$$ \mathcal{R}_0 > 1 \iff B^\star > 1.$$
Consequently, the condition $\mathcal{R}_0 > 1$ implies that $\lambda_1^{A\star}>0$. This concludes the proof.
\section{The basic SLIARS model}
\label{app:SLIARS-model}
\begin{figure}[htbp]
\centering
\def\hhskip{*3} 
\def\vvskip{*2.5} 
\begin{tikzpicture}[scale=0.7, 
        every node/.style={transform shape},
        auto,
        param/.style={draw, fill=white, rectangle, inner sep=2pt, align=center},
        cloud/.style={minimum width=30pt, draw, circle, fill=gray!20},
        erlang/.style={minimum width=60pt, minimum height=30pt, draw, rounded corners},
        distrib/.style={minimum width=30pt, minimum height=60pt, draw, rounded corners},
        line/.style={draw, -latex'}]    
    \node[cloud, fill=blue!30] (S) at (0\hhskip,0\vvskip) {$S$};
    \node[distrib, fill=red!10] (L) at (1\hhskip,0\vvskip) {$\bL$};
    \node[erlang, fill=red!80] (I) at (2\hhskip,1\vvskip){$\bI$};
    \node[erlang, fill=red!80] (A) at (2\hhskip,-1\vvskip){$\bA$};
    \node[cloud, fill=green!40] (R) at (3\hhskip,0\vvskip){$R$};
    \node [cloud, left=0.3\hhskip of S, draw=none, fill=none] (birthS) {};
    \node [cloud, shift={(270:0.6\hhskip)}, draw=none, fill=none] (deathS) at (S) {};
    \node [cloud, shift={(270:0.8\hhskip)}, draw=none, fill=none] (deathL) at (L) {};
    \node [cloud, shift={(0:1\hhskip)}, draw=none, fill=none] (deathI) at (I) {};
    \node [cloud, shift={(0:0.6\hhskip)}, draw=none, fill=none] (deathR) at (R) {};
    \node [cloud, shift={(0:0.9\hhskip)}, draw=none, fill=none] (deathA) at (A) {};
    \draw[->, thick] (birthS) to node[above] {$b$} (S);
    \draw[->, thick] (S) to node[midway,left] {$dS$} (deathS);
    \draw[->, thick] (L) to node[midway,left] {$dL$} (deathL);
    \draw[->, thick] (I) to node[above,sloped] {$(d+\mu)I$} (deathI);
    \draw[->, thick] (A) to node[above,sloped] {$dA$} (deathA);
    \draw[->, thick] (R) to node[above] {$dR$} (deathR);
    \draw[->, thick] (S) to node [above,sloped] {$\lambda S$} (L);
    \draw[->, thick] (L) to node [below,sloped] {$\pi\varepsilon L$} (I);
    \draw[->, thick] (L) to node [above,sloped,midway] {$(1-\pi)\varepsilon L$} (A);
    \draw[->, thick] (I) to node [above,sloped] {$\gamma_II$} (R);
    \draw[->, thick] (A) to node [above,sloped] {$\gamma_AA$} (R);
    \draw[->, thick] (A) to node [sloped,above] {$\tau A$} (I);
    \draw[->, thick] (R) to (3\hhskip,1.3\vvskip) to node [above] {$\nu R$} (0\hhskip,1.3\vvskip) to (S);
\end{tikzpicture}
\caption{Flowchart of a classic SLIARS model. Horizontal rectangular nodes are ``Erlang segments'', i.e., they comprise a number of similar compartments; see Figure~\ref{fig:flow-within-infectious} for details. The vertical rectangular node is the distribution $\bL$.}
\label{fig:flow-diagram-SLIAR}
\end{figure}

In order to compare comparable models, we retain the Erlang structure of $\bI$ and $\bA$ and correspondingly subdivide the latent compartment into a tracking vector $\mathbf{L} = (L_1^I, \dots, L_n^I, L_1^A, \dots, L_n^A)$ to explicitly record the source of the infection, exactly as we do with the exposed compartments in the exposure model \eqref{sys:SEIARS}.
The flow diagram is shown in Figure~\ref{fig:flow-diagram-SLIAR}.

Unlike the SEIARS model \eqref{sys:SEIARS}, where exposure does not guarantee infection, entering a latent compartment in a classic SLIARS model implies an infection has been successfully acquired, where as frequently, \emph{success} is from the perspective of the pathogen. 
Thus, the effective force of infection incorporates the progression probability $\delta_k^X$ directly at the moment of contact, taking the form
\begin{equation}
\label{eq:force-of-infection-SLIARS}
\lambda_{\text{inf}} = \frac{1}{N} \langle \bm{\delta} \circ \bc, \bY \rangle = \frac{1}{N} \left( \sum_{k=1}^n \delta_k^I c_k^I I_k + \sum_{k=1}^n \delta_k^A c_k^A A_k \right).
\end{equation}
The flow of newly infected individuals into the specific latent compartment $L_k^X$ is $\delta_k^X c_k^X X_kS/N$. Once the latent period concludes, all individuals progress into infectious stages (none return to $S$). 
The model takes the form:
\begin{subequations}
\label{sys:SLIARS}
\begin{align}
\dot S &= b+\nu R -\left(\lambda_{\text{inf}}+d\right)S, \label{sys:SLIARS-dS} \\
\dot L_k^I &= \delta_k^I \frac{c_k^I I_k}{N} S - (\varepsilon_k^I+d)L_k^I, && k=1,\dots,n, \label{sys:SLIARS-dLI} \\
\dot L_k^A &= \delta_k^A \frac{c_k^A A_k}{N} S - (\varepsilon_k^A+d)L_k^A, && k=1,\dots,n, \label{sys:SLIARS-dLA} \\
\dot I_1 &= \sum_{k=1}^n \left[ \pi_k^I \varepsilon_k^I L_k^I + \pi_k^A \varepsilon_k^A L_k^A \right] + \tau_1 A_1 - (\gamma_I+\mu_1+d)I_1, \label{sys:SLIARS-dI1} \\
\dot I_k &= \gamma_I I_{k-1} + \tau_k A_k - (\gamma_I+\mu_k+d)I_k, && k=2,\dots,n, \label{sys:SLIARS-dIk} \\
\dot A_1 &= \sum_{k=1}^n \left[ (1-\pi_k^I) \varepsilon_k^I L_k^I + (1-\pi_k^A) \varepsilon_k^A L_k^A \right] - (\gamma_A+\tau_1+d)A_1, \label{sys:SLIARS-dA1} \\
\dot A_k &= \gamma_A A_{k-1} - (\gamma_A+\tau_k+d)A_k, && k=2,\dots,n, \label{sys:SLIARS-dAk} \\
\dot R &= \gamma_I I_n + \gamma_A A_n - (\nu+d)R. \label{sys:SLIARS-dR}
\end{align}
\end{subequations}
The model is considered with the initial condition $\bX(0)\geq\bzero$.

Applying the next-generation matrix method \cite{vdDWatmough2002}, we determine the basic reproduction number $\R_0$ at the disease-free equilibrium $\bX^0 = (S^0, \bzero)$ where $S^0 = b/d$. 
Defining the infected state vector as $(\mathbf{L}, \bI,\bA)$, the vectors of new infection rates $\mathcal{F}$ and other transitions $\mathcal{V}$ evaluated at the disease-free equilibrium yield the Jacobian matrices $F$ and $V$,
\begin{equation*}
F = \begin{pmatrix}
\bzero_{2n\times 2n} & \diag(\bm{\delta} \circ \bc) \\
\bzero_{2n\times 2n} & \bzero_{2n\times 2n}
\end{pmatrix}, 
\quad 
V = \begin{pmatrix}
\diag(\bvarepsilon) + d\mathbb{I}_{2n} & \bzero_{2n\times 2n} \\
- \tilde{P} & M
\end{pmatrix},
\end{equation*}
where $\diag(\bc)$, $\diag(\bvarepsilon)$ and $M$ are defined identically to the general SEIARS model. 
The $2n \times 2n$ matrix $\tilde{P}$ captures the progression from $\mathbf{L}$ into $I_1$ and $A_1$. 
Because progression to the infectious state is guaranteed upon exiting $\mathbf{L}$, the probability weights strictly rely on the symptom proportions $\bm{\pi}$. 
Defining the vectors $\tilde{\bp}_I = \bm{\pi} \circ \bvarepsilon$ and $\tilde{\bp}_A = (\bone_{2n}-\bm{\pi}) \circ \bvarepsilon$, it takes the form:
\[
\tilde{P} = \begin{pmatrix} 
\tilde{\bp}_I^T \\
\bzero_{n-1\times 2n} \\ 
\tilde{\bp}_A^T \\ 
\bzero_{n-1\times 2n} 
\end{pmatrix}.
\]
By block matrix inversion of $V$, we obtain:
\begin{equation}
\label{eq:R0-SLIARS}
\R_0^{\eqref{sys:SLIARS}} = \rho(FV^{-1}) = 
\rho \left( \diag(\bm{\delta} \circ \bc) M^{-1} \tilde{P} (\diag(\bvarepsilon) + d\mathbb{I}_{2n})^{-1} \right).
\end{equation}

The reproduction numbers of the SEIARS and SLIARS models are actually equal. 
We can rewrite the leading matrix as $\diag(\bm{\delta} \circ \bc) = \diag(\bm{\delta}) \diag(\bc)$. 
Let us compare this with the SEIARS spectral radius from \eqref{eq:R_0-computation},
\begin{align*}
\R_0^{\eqref{sys:SEIARS}} &= \rho \left( \diag(\bc) M^{-1} P (\diag(\bvarepsilon) + d\mathbb{I}_{2n})^{-1} \right).
\end{align*}
Observing that $P = \tilde{P} \diag(\bm{\delta})$ and since diagonal matrices commute, allowing us to write
\[
\diag(\bm{\delta}) (\diag(\bvarepsilon) + d\mathbb{I}_{2n})^{-1} = (\diag(\bvarepsilon) + d\mathbb{I}_{2n})^{-1} \diag(\bm{\delta}),
\]
we have
\begin{align*}
\R_0^{\eqref{sys:SEIARS}} &= \rho \left( \diag(\bc) M^{-1} \tilde{P} (\diag(\bvarepsilon) + d\mathbb{I}_{2n})^{-1} \diag(\bm{\delta}) \right).
\end{align*}
Applying the cyclic property of the spectrum, $\rho(AB) = \rho(BA)$, we shift $\diag(\bm{\delta})$ to the front:
\begin{align*}
\R_0^{\eqref{sys:SEIARS}} &= \rho \left( \diag(\bm{\delta}) \diag(\bc) M^{-1} \tilde{P} (\bvarepsilon + d\mathbb{I}_{2n})^{-1} \right) \\
&= \rho \left( \diag(\bm{\delta} \circ \bc) M^{-1} \tilde{P} (\bvarepsilon + d\mathbb{I}_{2n})^{-1} \right) \equiv \R_0^{\eqref{sys:SLIARS}}.
\end{align*}
Thisguarantees numerical comparisons are generated from the exact same epidemic baseline.

\bibliographystyle{plain}
\bibliography{references,biblio_Arino_Julien}
\end{document}